\theoremstyle{definition}
\newtheorem{theorem}{Theorem}
\newtheorem{definition}[theorem]{Definition}
\setlist[enumerate]{itemsep=0.2ex, topsep=0.5\topsep}
\setlist[description]{itemsep=0.2ex, topsep=0.5\topsep}
\setlist[itemize]{itemsep=0.2ex, topsep=0.5\topsep}
\def\thmt@refnamewithcomma #1#2#3,#4,#5\@nil{%
	\@xa\def\csname\thmt@envname #1utorefname\endcsname{#3}%
	\ifcsname #2refname\endcsname
	\csname #2refname\expandafter\endcsname\expandafter{\thmt@envname}{#3}{#4}%
	\fi
}
\DeclareMathOperator{\depth}{depth}
\DeclareMathOperator{\children}{chi}
\DeclareMathOperator{\ged}{GED}
\DeclareMathOperator{\sdm}{SDM}
\DeclareMathOperator{\sdted}{SDTED}
\DeclareMathOperator{\diam}{diam}
\newcommand{\bbN}{\ensuremath{\mathbb{N}}}
\newcommand{\C}{\ensuremath{\bm{C}}}
\algnewcommand\algorithmicforeach{\textbf{for each}}
\algnewcommand{\algorithmicand}{\textbf{ and }}
\algnewcommand{\algorithmicor}{\textbf{ or }}
\algnewcommand{\OR}{\algorithmicor}
\algnewcommand{\AND}{\algorithmicand}
\algnewcommand\algorithmicnot{\textbf{not}}
\let\oldReturn\Return
\renewcommand{\Return}{\State\oldReturn}
\begin{document}
\title{Approximating the Graph Edit Distance with Compact Neighborhood Representations}

\author[$\dag$,1,2]{Franka Bause}
\author[$\dag$,3]{Christian Permann}
\author[1,4]{Nils M.~Kriege}

\affil[1]{University of Vienna, Faculty of Computer Science, Vienna, Austria}
\affil[2]{University of Vienna, UniVie Doctoral School Computer Science, Vienna, Austria}
\affil[3]{University of Vienna, Department of Pharmaceutical Sciences, Vienna, Austria}
\affil[4]{University of Vienna, Research Network Data Science, Vienna, Austria}
\date{\vspace{-30pt}}

\maketitle
\def\thefootnote{$\dag$}\footnotetext{Authors contributed equally to this work.}
\begin{abstract}
\unboldmath
The graph edit distance is used for comparing graphs in various domains. Due to its high computational complexity it is primarily approximated.
Widely-used heuristics search for an optimal assignment of vertices based on the distance between local substructures. 
While faster ones only consider vertices and their incident edges, leading to poor accuracy, other approaches require computationally intense exact distance computations between subgraphs.
Our new method abstracts local substructures to neighborhood trees and compares them using efficient tree matching techniques. This results in a ground distance for mapping vertices that yields high quality approximations of the graph edit distance. By limiting the maximum tree height, our method supports steering between more accurate results and faster execution.
We thoroughly analyze the running time of the tree matching method and propose several techniques to accelerate computation in practice. We use compressed tree representations, recognize redundancies by tree canonization and exploit them via caching.
Experimentally we show that our method provides a significantly improved trade-off between running time and approximation quality compared to existing state-of-the-art approaches.
\end{abstract}

\section{Introduction}
\label{sec:introduction}

In recent years graphs have been widely used to model structured data and methods for analyzing them have gained in popularity. Graphs are commonly used in web content mining~\cite{web_content}, pattern recognition~\cite{pattern_recognition}, molecular property prediction~\cite{doi:10.1021/ci050039t,doi:10.1021/acs.jcim.8b00820}, and many other domains. Various tasks and applications require quantifying the similarity of two graphs. One of the most prominent measures of graph similarity is the \emph{graph edit distance} (GED), i.e., the cost for transforming one graph into another by applying predefined edit operations. Usually adding, removing and relabeling vertices and edges is allowed, where each operation is assigned a specific cost. The computation of the graph edit distance from $G$ to $H$ then consists of finding a sequence of edit operations of minimum total cost that can be applied to $G$ and transforms it into a graph $G'$ isomorphic to $H$.
In practice, the usefulness of the exact GED is limited, as its computation is  $\mathsf{NP}$-hard~\cite{2_ComparingStars}. This has lead to the development of fast heuristic algorithms, such as \emph{bipartite graph matching} (BGM)~\cite{27_Riesen}. This method derives an edit path from an optimal (linear) assignment between the vertices of the two graphs. The cost for assigning a vertex $v$ in $V(G)$ to a vertex $v'$ in $V(H)$ reflects the cost for editing $v$ and its incident edges to match $v'$ and its incident edges. Vertex insertion and deletion is encoded by introducing additional placeholder vertices.
The derived edit path is not necessarily optimal, but its cost is used as an approximation of the GED in many domains~\cite{survey_BGM}. However, there are applications, for which the approximation quality obtained using this method is not sufficient, or that could benefit from more accurate results, such as graph clustering, $k$-nn classification or similarity search in graph databases. Hence, extensions of BGM have been proposed, which take enlarged local substructures around vertices into account. However, these either capture only limited structural information using walks~\cite{bagsofwalks} or require exact GED computation between subgraphs~\cite{10.1007/978-3-319-18224-719} leading to a drastic increase in both memory and time complexity.
For this reason, we propose a new GED heuristic following the concept of BGM, which includes local information in the form of \emph{neighborhood trees} to drastically improve the trade-off between running time and approximation quality. We compare neighborhood trees by a \emph{structure and depth preserving tree edit distance} (SDTED), which is highly efficient compared to computing the exact GED, while retaining the benefits of incorporating expressive structural information.

The concept of neighborhood trees is inspired by so-called \emph{unfolding trees} related to the Weisfeiler-Leman heuristic for the graph isomorphism problem. The method iteratively refines vertex labels encoding vertex neighborhoods of increasing size. The technique has recently become popular in graph learning, where it forms the basis of successful graph kernels~\cite{JMLR:v12:shervashidze11a,DBLP:journals/ans/KriegeJM20} and is fundamental to the expressivity analysis of graph neural networks~\cite{DBLP:journals/corr/abs-1810-00826,DBLP:journals/corr/abs-1810-02244,wl_survey}.
The discrete colors representing unfolding trees have been used to obtain a highly efficient heuristic for the GED ignoring their subtle structural differences at the expense of approximation quality~\cite{OALin}. Also, graph neural networks have been employed to solve the regression task of predicting the GED for graph pairs, but the technique does not allow to obtain a corresponding edit path~\cite{simgnn}.
Unfolding trees are a suitable concept for understanding the structure encoded by iterative methods based on direct neighborhoods, but are rarely generated explicitly. A main reason for this is that they quickly grow in size with increasing height.
Our concept of neighborhood trees overcomes this problem by pruning repeated vertices and allowing compact representations without redundancy amenable to efficient tree matching.

\begin{figure*}[tb]
	\includegraphics[width=\textwidth]{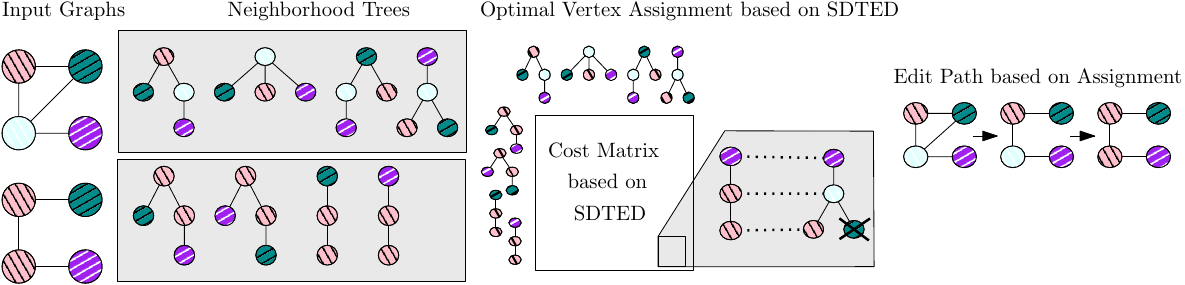}
	\caption{Overview of our approach. To approximate the graph edit distance between two graphs, an optimal assignment between their vertices is computed based on a cost matrix. Each entry of the cost matrix is obtained from SDTED between the neighborhood trees of the associated vertices. After an optimal assignment is found, an edit path between the two graphs is derived from it and the cost of this (suboptimal) edit path is the approximated graph edit distance.}
	\label{fig:overview}
\end{figure*}

\noindent
\textbf{Our contribution.} We make the following contributions.
\begin{enumerate}%
	\item We propose to use tree structures encoding node neighborhoods and their SDTED in the BGM framework for approximating the graph edit distance. An overview of our approach is depicted in Figure~\ref{fig:overview}.
	\item We propose \emph{neighborhood trees}, a compact and powerful variation of the Weisfeiler-Leman unfolding trees.
	\item We thoroughly analyze the running time of the SDTED and improve it to $O(n^2\Delta)$ for two trees with $n$ vertices and maximum degree $\Delta$ reducing it by a factor of $\Delta^2h$ over the best previously known result~\cite{80_generalizedWLkernel}, where $h$ is the height of the tree.
	\item We apply caching techniques and compressed representations to drastically speed up the computation in practice.
	\item We show that our new approach outperforms state-of-the-art approximation algorithms for the GED regarding approximation quality while being computed efficiently.
\end{enumerate}

\section{Preliminaries}
\label{sec:preliminaries}
In this section, we give an overview of the necessary definitions and the notation used throughout the article. We first introduce graphs, the graph edit distance and a standard approximation scheme for it. Then we provide an introduction to the Weisfeiler-Leman algorithm and the SDTED.

\subsection{Graph Theory}
	A \emph{graph} $G = (V,E,\mu,\nu)$ consists of a set of vertices $V$, a set of edges $E \subseteq V\times V$ between them and labeling functions $\mu \colon V \rightarrow L$ and $\nu\colon  E \rightarrow L$ for the vertices and edges, respectively. 
In this article, we assume that $L$ is a set of categorical labels. We consider undirected graphs and refer to an edge between $u$ and $v$ by $uv=vu$.
The vertices and edges of a graph $G$ are denoted by $V(G)$ and $E(G)$, respectively.
The \emph{neighbors} of a vertex $u \in V$ are denoted by $N(u) = \{v\mid  uv\in E\}$.
A \emph{path} is a sequence of vertices $(v_0, \dots, v_n)$ such that 
$v_{i}v_{i+1} \in E$ for $0 \leq i < n$. The vertices and the edges 
connecting consecutive vertices are said to be \emph{contained} in the path. 
The \emph{length} of a path is the number of edges it contains.
A \emph{shortest path} between two vertices $u,v \in V$ is a path $(u,\dots, v)$ of minimum length.
The \emph{diameter} of a graph $G$ is the maximum shortest-path length between any two vertices in $G$ and denoted by $\diam(G)$.
	A \emph{rooted tree} $T$ is a connected, acyclic graph with a distinct vertex $r \in V(T)$ referred to as \emph{root} denoted by $r(T)$.
	For $v \in V(T)\backslash \{r\}$ the \emph{parent} $p(v)$ is defined as the unique vertex $u \in N(v)$ closer to $r$ than $v$, and $\forall v \in V(T)$ the \emph{children} are defined as $\children(v)= N(v)\backslash \{p(v)\}$.
	A node $v$ is a \emph{leaf} if $\children(v)=\emptyset$. We denote the set of leaves of a tree $T$ by $L(T)= \{v \in V(T) \mid \children(v)=\emptyset\}$.

	An \emph{isomorphism} between two graphs $G$ and $H$ is a bijection $\Phi\colon V(G)\rightarrow V(H)$ such that 
	\begin{enumerate}[label=(\roman*)]
	 \item $\forall u,v \in V(G)\colon \mu(v) = \mu(\Phi(v))$, and
	 \item $\forall u,v \in V(G)\colon \Phi(u)\Phi(v)\in E(H) \Leftrightarrow uv \in E(G)$, and
	 \item $\forall uv \in E(G)\colon  \nu(uv) = \nu(\Phi(u)\Phi(v))$.
	\end{enumerate}
	For two rooted trees $T$ and $T'$ in addition $\Phi(r(T))=r(T')$ must hold.
	If an isomorphism between two graphs $G$ and $H$ exists, they are called \emph{isomorphic}, denoted by $G \cong H$.

\subsection{Graph Edit Distance}
The graph edit distance (GED) is the cost of transforming one graph into the other by deletion, insertion and relabeling of vertices and edges.
An \emph{edit path} between $G$ and $H$ is a sequence $(e_1,e_2,\dots,e_k)$ of such edit operations that, when applied one after the other to $G$, yield a graph $G'\cong H$.
Edit operations have non-negative costs given by the edit cost function $c$. 
The GED is the cost of a cheapest edit path.
Formally, the \emph{graph edit distance} between two graphs $G$ and $H$ is defined as
\begin{linenomath*}
\begin{equation*}
 \ged(G,H) = \min\left\{ \sum_{i=1}^{k} c(e_i) \mathrel{\Big|} (e_1, \text{...}, e_k) \in \Upsilon(G,H) \right\},
\end{equation*}
\end{linenomath*}
 where $\Upsilon(G,H)$ denotes the set of all possible edit paths from $G$ to $H$.
Since computing the GED is $\mathsf{NP}$-hard~\cite{2_ComparingStars}, it is often approximated. Many heuristics rely on finding an optimal assignment between the vertices of the graphs.
	\label{defAssProb}
	Let $A$ and $B$ be two sets with $|A|=|B|=n$ and $c\colon A \times B \to \mathbb{R}$ a ground cost function. 
	An \emph{assignment} between $A$ and $B$ is a bijection $\pi\colon A \to B$. The \emph{cost} of an assignment $\pi$ is $c(\pi) = \sum_{a\in A} c(a,\pi(a))$.
	The \emph{assignment problem} is to find an assignment with minimum cost 
and  can be solved in cubic time using the Hungarian method~\cite{Munkres,Jonker2005ASA}.

\subsection{Bipartite Graph Matching}\label{sec:preliminaries:bgm}
An upper bound for the GED between two graphs can be obtained from any edit path between them. An edit path can be derived from an optimal assignment of their vertices regarding a cost matrix.
Riesen and Bunke~\cite{27_Riesen} constructed a $k\times k$ cost matrix $\C$, with $k=n+m$ for two graphs with $n$ and $m$ vertices, respectively. For simplicity, we present a reduced matrix with $k=\max\{n,m\}$ that is sufficient for metric edit costs~\cite{Serratosa15}. Due to the symmetry of the GED in this case no vertices need to be inserted. Assume w.l.o.g.~that $n\geq m$, then we obtain a quadratic $n \times n$ cost matrix by adding $n-m$ columns representing deletion of vertices in $G$ by defining
	\[
	\C=\left[ \begin{array}{cccc|ccc}
	c_{1,1} & c_{1,2} & \dots & c_{1,m} & c_{1,\epsilon} & \dots & c_{1,\epsilon} \\
	c_{2,1} & c_{2,2} & \dots & c_{2,m} & c_{2,\epsilon} & \dots & \vdots \\
	\vdots & \vdots & \ddots & \vdots & \vdots & \ddots  & \vdots\\
	c_{n,1} & c_{n,2} & \dots & c_{n,m} &  c_{n,\epsilon} & \dots & c_{n,\epsilon}\\
	\end{array}\right].
	\]
The entries of $\C$ can be interpreted as follows:
Values $c_{i,j}$ (left part) represent the costs for replacing vertex $i$ of $G$ with vertex $j$ of $H$.
The entries $c_{i,\epsilon}$ (right part) are the costs for deleting vertex $i$ of $G$ and its incident edges.
Based on this cost matrix an optimal assignment is computed. %
An equivalent solution can be obtained from an $n\times m$ cost matrix by a more involved reduction applicable even for non-metric edit costs~\cite{BougleuxGBB20}.

An edit path can be derived from a vertex assignment $\pi\colon V(G)\to V(H)$ as follows~\cite{27_Riesen}:
Any vertex
	\begin{enumerate}[label=(\roman*)]
		\item $v \in V(G)$ with $\pi(v) \notin V(H)$ is deleted,
		\item $v \in V(H)$ with $\pi^{-1}(v) \notin V(G)$ is inserted,
		\item $v \in V(G)$ with $\pi(v) \in V(H)$ and $\mu(v) \neq \mu(\pi(v))$ is relabeled.
	\end{enumerate}
Similarly any edge
	\begin{enumerate}[label=(\roman*)]
		\item $uv \in E(G)$ with $\pi(u)\pi(v) \notin E(H)$ is deleted,
		\item $uv \in E(H)$ with $\pi^{-1}(u)\pi^{-1}(v) \notin E(G)$ is inserted,
		\item $uv \in E(G)$ with $\pi(u)\pi(v) \in E(H)$ and $\nu(uv) \neq \nu(\pi(u)\pi(v))$ is relabeled.
	\end{enumerate}
The sum of edit costs is the approximated GED. Since the edit path found might not be optimal, this method provides an upper bound.

\subsection{Weisfeiler-Leman Color Refinement}
\label{pre:wlcr}
The 1-dimensional Weisfeiler-Leman algorithm or \emph{color refinement} starts with all vertices having a color representing their label (or a uniform coloring in case of unlabeled vertices). 
In each iteration the color of every vertex $v$ in $V(G)$ is updated based on the multiset of colors of its neighbors according to
\begin{linenomath*}
\begin{equation*}
c_{i+1}(v) = h\left(c_i(v),\{\!\!\{c_i(u)\mid u \in N(v)\}\!\!\}\right),
\end{equation*}
\end{linenomath*}
where $h$ is an injective function.
Since every graph isomorphism associates vertices having the same color only, the approach 
allows to recognize graphs as non-isomorphic.
Non-isomorphic graphs with the same multiset of colors for all $i\in \bbN$ exist, but they are rare even for $i=2$~\cite{Bab+1979} and many real-world graph datasets do not contain such graph pairs~\cite{DBLP:conf/ijcai/0001FK21}.
The expressivity and computational efficiency of the method makes it an important tool for graph isomorphism testing and machine learning~\cite{wl_survey,DBLP:conf/ijcai/0001FK21}.

\begin{figure}[tb]
	\centering
	\begin{subfigure}{0.15\linewidth}
		\includegraphics[height=0.1\textheight]{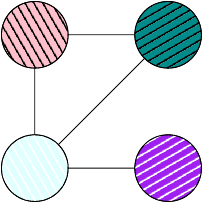}
	\end{subfigure}
	\begin{subfigure}{0.23\linewidth}
		\includegraphics[height=0.1\textheight]{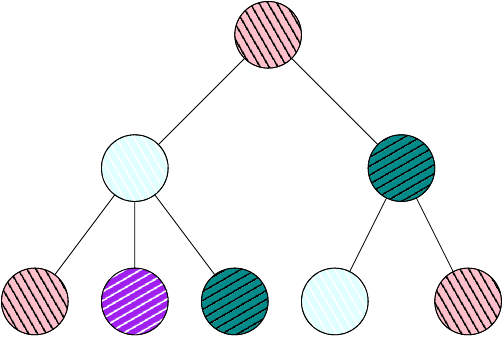}
	\end{subfigure}
	\begin{subfigure}{0.12\linewidth}
		\includegraphics[height=0.1\textheight]{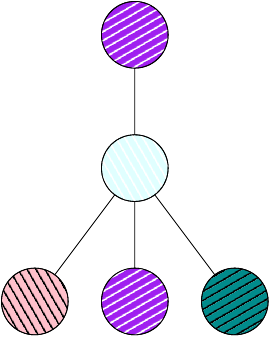}
	\end{subfigure}
	\begin{subfigure}{0.23\linewidth}
		\includegraphics[height=0.1\textheight]{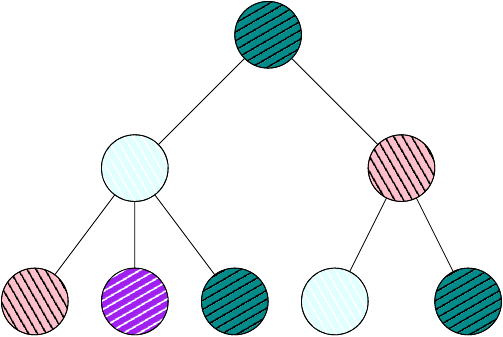}
	\end{subfigure}
	\begin{subfigure}{0.19\linewidth}
		\includegraphics[height=0.1\textheight]{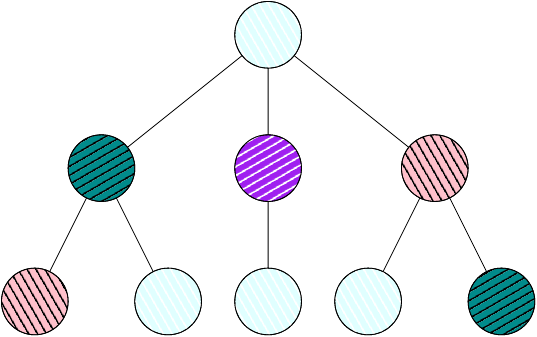}
	\end{subfigure}
	
	\caption{Graph $G$ and its unfolding trees $F_2^v, v\in V(G)$.}
	\label{fig:unfoldingtree}
\end{figure}

The colors encode neighborhoods of increasing radius by a tree structure, see Figure~\ref{fig:unfoldingtree}.
Let $\phi\colon V(T)\to V(G)$ be a mapping such that $\phi(n)=v$ if the node $n$ in $V(T)$ represents the vertex $v$ in $V(G)$.
Mathematically, the \emph{unfolding tree} $F_i^v$ with height $i$ of the vertex $v \in V(G)$ is defined recursively
	as the tree with a root $n_v$ with $\phi(n_v)=v$ and child subtrees $F^w_{i-1}$ for all $w \in N(v)$, and $F_0^v = (\{n_v\}, \emptyset)$. The labels of the original graph are preserved.
The unfolding trees $F_i^u$ and $F_i^v$ of two vertices $u$ and $v$ are isomorphic if and only if $c_i(u)=c_i(v)$.

\subsection{Structure and Depth Preserving Tree Edit Distance}\label{pre:sdted}
A suitable distance for (labeled) rooted trees is the \emph{structure and depth preserving tree edit distance} (SDTED)~\cite{80_generalizedWLkernel}, which is defined as
\begin{linenomath*}
\begin{equation*}
\sdted(T,T') := \min\{c'(M)\mid M \in \sdm(T,T')\},
\end{equation*}
\end{linenomath*}
where $\sdm(T,T')$ denotes the set of all structure and depth preserving mappings between $T$ and $T'$.
	A mapping $M \subseteq V(T) \times V(T')$ is \emph{structure and depth preserving}, iff
	\begin{enumerate}[label=(\roman*)]
		\item $\forall(u,u'),(v, v') \in M\colon$ $u = v \Leftrightarrow u'= v'$,
		\item $(r(T),r(T')) \in M$, and 
		\item $\forall(u,u') \in M\colon$ $(p(u),p(u'))\in M$. 
	\end{enumerate}
	Let $M_1=V(T)\setminus\{m_1\mid (m_1,m_2)\in M\}$ and $M_2= V(T')\setminus\{m_2\mid (m_1,m_2)\in M\}$ be the vertices not in the mapping.
	The cost $c'$ of the mapping $M$ is defined as
\begin{linenomath*}
\begin{equation*}
	c'(M)= \hspace{-.7em} \underbrace{\sum_{(u,v) \in M} \hspace{-.7em} c(\mu(u),\mu(v))}_{\text{Relabeling}} + 
	\underbrace{\sum_{u \in  M_1} c(u, \epsilon)}_{\text{Deletion}} + \underbrace{\sum_{v \in M_2} c(\epsilon,v)}_{\text{Insertion}}.
\end{equation*}
\end{linenomath*}
We can respect the label of an edge $p(v)v$ by associating the cost for its relabeling with the endpoint $v$.
Given two trees, the SDTED can be computed in a bottom-up fashion from optimal assignments between the children of two vertices at the same level until reaching the roots, see  Algorithm~\ref{alg:assign_simple} for a naive implementation of the SDTED. This simple variant does not include edge costs, but only vertex costs. Trees and subtrees with different numbers of children are padded with placeholder vertices such that deletion/insertion costs can be considered. The deletion cost of a whole subtree can be determined without additional recursion.  For our implementation of the SDTED we included edge costs and store the costs for all computed pairs of trees and subtrees using a lexicographic encoding, see Section~\ref{sec:caching}.
\begin{algorithm}[tb]
	\caption{Computation of SDTED}\label{alg:assign_simple}
	\begin{algorithmic}
		\Function{sdted}{Tree $T_1$, Tree $T_2$}
		\State $n \gets \max(\lvert \children(r(T_1))\rvert,\lvert \children(r(T_2))\rvert)$
		\State $ \textsc{pad}(T_1,n)$, $ \textsc{pad}(T_2,n)$\Comment{add undef. for equal size}
		\State $\C \gets \mathbb{R}^{n\times n}$\Comment{$0$ initialized}
		\ForEach {$s_i \in \children(r(T_1))$} 
		\ForEach {$s'_j \in \children(r(T_2))$}
		\If{$s_i$ is defined \OR $s'_j$ is defined }
		\If{$s_i$ is undefined}
		\State $c_{ij} \gets \textsc{deletionCost}(s'_j)$
		\ElsIf{$s'_j$ is undefined}
		\State $c_{ij} \gets \textsc{insertionCost}(s_i)$
		\Else 
		\State $c_{ij} \gets \textsc{sdted}(s_i,s'_j)$ \Comment{recursive call}
		\EndIf
		\EndIf
		\EndFor
		\EndFor
		\State $c_r \gets \textsc{cost}(r(T_1), r(T_2))$
		\State $c_s \gets \textsc{lapCost}(\C)$ \Comment{min. subtree assignment cost}
		\Return$c_r + c_s$
		\EndFunction
		
	\end{algorithmic}
\end{algorithm}

\section{Related Work}
\label{sec:relatedwork}

We discuss prior work on computing and approximating the GED as well as graph similarities based on the Weisfeiler-Leman algorithm.

\subsection{Exact Methods for Computing the GED}
 Computing the GED is a combinatorial optimization problem and different standard approaches for such problems exist and have been adapted specifically for the GED. 
 Some exact methods are based on backtracking search, e.g., A*-GED, DF-GED~\cite{10.5220/0005209202710278},
 CSI-GED~\cite{Gouda2016CSIGEDAE} or BSS\_GED~\cite{83_bssged}, while
 others use integer linear programming formulations such as BIP-GED~\cite{LerougeARHA15} or MIP-GED~\cite{BLUMENTHAL202046}.
 The complexity, however, makes exact approaches infeasible for large graphs and approximate methods are widely-used~\cite{survey_BGM}.

\subsection{Approximate Methods for the GED}
Many approximation methods~\cite{2_ComparingStars,27_Riesen,92_improved,29_ged_heuristics,10.1007/978-3-319-18224-719} are based on optimal assignments, which means they compute an assignment between the vertices of the graphs to find an upper or lower bound for the GED.
Usually, faster methods are only suitable for graphs with discrete labels and uniform edit costs~\cite{29_ged_heuristics}.
Riesen and Bunke~\cite{27_Riesen} introduced the basic concept of bipartite graph matching (BGM), where the costs for the assignment problem are estimated based on the vertices assigned to each other and their incident edges.
This very general approach can handle continuous labels and has later been improved to yield a tight lower bound for arbitrary (metric) edit costs, referred to as \textsc{Branch}~\cite{92_improved}.

While restricting to vertices and their incident edges allows to obtain lower bounds for the GED, better approximations can be obtained using larger substructures.
Carletti et al.~\cite{10.1007/978-3-319-18224-719} proposed to use neighborhood subgraphs, which are defined for a vertex $v$ of $G$ as the subgraph induced in $G$ by the vertices with a shortest-path distance from $v$ of at most $h$, where $h$ is a radius parameter. The cost for assigning a node $v$ to a node $v'$ are obtained by comparing their neighborhood subgraphs using the exact GED with the additional constraint that $v$ is assigned to $v'$. A downside of the approach is that the exact GED computation leads to a running time exponential in the size of the subgraphs, drastically increasing the computational complexity of the method.
Gaüzère et al.~\cite{bagsofwalks} associate so-called bags of walks $B_v$ with each vertex $v$, which contain all label sequences $(\mu(v_0),\nu(v_0v_1),\mu(v_1),\dots,\mu(v_h))$ associated with walks $(v_0,v_1,\dots,v_h)$ of length $h$ starting at the vertex $v=v_0$.
The assignment costs of a pair of vertices $v$ in $V(G)$ and $v'$ in $V(H)$ are obtained by comparing $B_v$ and $B_{v'}$. To avoid enumerating an exponential number of walks, the assignment cost is estimated from the $h$th powers of the adjacency matrices $\bm{A}_G$, $\bm{A}_H$ and $\bm{A}_{G\times H}$ of $G$, $H$ and their direct product graph $G\times H$, respectively. 
This comes at the cost of losing local structure information, as only the labels of the start and end vertices are considered and intermediate vertices are omitted. Moreover, the approximation quality decreases for walks with five or more vertices due to the phenomenon of \emph{tottering}~\cite{bagsofwalks}. In Table~\ref{tab:complexity} we compare the theoretical complexity of the methods based on BGM closely related to our approach.

\begin{table}[tb]
	\centering
	\caption{Time complexity of computing the cost matrix $\C$ for commonly used approximations for the graph edit distance based on the bipartite graph matching method. Here, $\Delta$ denotes the maximum degree, $h$ is the radius of the subgraphs/walk length and $\omega$ is the exponent of matrix multiplication. The best theoretic algorithms achieve $\omega<2.38$ but typically $\omega\approx2.81$ in practice~\cite{29_ged_heuristics}. In addition, we consider the complexity, when $\Delta$ is bounded by a constant.}
	\label{tab:complexity}
	\begin{tabular}{ccc}
		\toprule
		\textbf{Method} & \textbf{General graphs} & \textbf{Bounded-degree graphs} \\
		\midrule
		\emph{BGM}~\cite{27_Riesen} & $O(\vert V \vert^2 \Delta^3)$ & $O(\vert V \vert^2)$\\
		\emph{Walk}~\cite{bagsofwalks} & $O(\log(h)\vert V \vert^{2\omega})$& $O(\log(h)\vert V \vert^{2\omega})$ \\
		\emph{Subgraph}~\cite{10.1007/978-3-319-18224-719} & $O(\vert V \vert^2 \exp(\Delta^h))$& $O(\vert V \vert^2 \exp(h))$\\
		\textbf{\emph{Ours}} [Theorem~\ref{thm:complexity}] & $O(\vert V \vert^2 \vert E \vert^2 \Delta)$& $O(\vert V \vert^4)$\\
		\bottomrule
	\end{tabular}
\end{table}

GED approximation methods not based on BGM are often multiple orders of magnitude slower, while improving accuracy only marginally~\cite{29_ged_heuristics}.
This includes methods such as the LP relaxations of ILP formulations \cite{LerougeARHA15}, which may yield tighter lower bounds but, again, at drastically increased runtime.

\subsection{Weisfeiler-Leman based Methods for GED Approximation and Graph Similarity}
Our idea of neighborhood trees is inspired by Weisfeiler-Leman unfolding trees.
Graph and vertex similarities based on the Weisfeiler-Leman algorithm have been studied extensively in machine learning~\cite{DBLP:conf/ijcai/0001FK21}. Several graph kernels count matching  Weisfeiler-Leman colors~\cite{DBLP:journals/ans/KriegeJM20}, use them to compute optimal vertex assignments~\cite{Kriege2016b} or define similarities between colors, e.g., by matching unfolding trees~\cite{80_generalizedWLkernel}.
A GED approximation based on finding an optimal vertex assignment regarding a tree metric generated by the Weisfeiler-Leman algorithm was proposed by Kriege et al.~\cite{OALin}. This approach uses matching Weisfeiler-Leman colors and implicitly compares the unfolding trees of two vertices level-wise (from root to leaves) and determines the similarity as the number of levels up to which the unfolding trees are isomorphic.
This leads to a coarse similarity, in which vertices with different labels, but equal neighborhoods are considered less similar  than vertices with the same label and completely different neighborhoods. While the technique is faster than comparable BGM methods, it is less accurate on some datasets.

 Several graphs kernels are based on Weisfeiler-Leman colors~\cite{DBLP:journals/ans/KriegeJM20}, some of which measure vertex similarity by the number of iterations required until the vertices obtain different colors~\cite{Kriege2016b,TogninalliGLRB19}.
 The use of more fine-grained similarities between colors was first proposed in~\cite{Yanardag2015a}.
 Recently, this was realized by applying the SDTED to Weisfeiler-Leman unfolding trees~\cite{80_generalizedWLkernel}.
 A kernel based on the Wasserstein distance and sets of unfolding trees compared by an approximate tree edit distance was proposed in~\cite{WWLS22}. 
 The method is only feasible for unfolding trees of small height due to its high runtime and does not approximate the GED.
 
 Graph neural networks have been used for predicting the GED for pairs of graphs based on their vertex- and graph-level embeddings~\cite{simgnn}.
 This technique, however, does not yield an edit path and cannot guarantee that the result is a lower or upper bound for the GED. Also, datasets annotated with ground-truth GED are rare.
 The efficiency of graph neural networks and kernels operating on the unfolding tree (instead of the color) naturally depends on the tree size. Replacing unfolding trees by a more compact representation without losing structural information promises advantages in all these areas.

\subsection{Discussion}
Methods for approximating the GED face the trade-off between approximation quality and running time. The class of algorithms following the framework of BGM is particularly promising in the respect~\cite{29_ged_heuristics}.
However, in some applications the approximation quality of standard BGM algorithms is not sufficient.
Current approaches using larger substructures~\cite{10.1007/978-3-319-18224-719,bagsofwalks} rely on computationally expensive cost functions. Although this can potentially negate the advantage of BGM in terms of running time, the time complexity of these methods has often not been thoroughly analysed.
The method based on subgraphs~\cite{10.1007/978-3-319-18224-719} computes the exact GED between subgraphs of size $\Delta^h$, where $\Delta$ is the maximum degree and $h$ the radius parameter. Since exact GED algorithms require exponential time for each of the $\vert V \vert^2$ subgraph pairs, we obtain a total running time of $O(\vert V \vert^2\exp(\Delta^h))$ for computing the cost matrix.
For the method based on walks~\cite{bagsofwalks}, the $h$th power of the adjacency matrices of both graphs and their product graph is computed. Using exponentiation by squaring $O(\log(h))$ matrix multiplications are needed, where the matrix size is $n^2 \times n^2$ for the product graph of two graphs with $n$ vertices. From this all entries of the cost matrix can be derived, leading to a total running time of $O(\log(h)\vert V \vert^{2\omega})$, with $\omega$ the exponent of matrix multiplication.
We have summarized the results in Table~\ref{tab:complexity} and provide simplified running times for the case of input graphs with bounded-degree. This graph class is relevant as the GED is often applied to compare graphs with bounded degree such as molecular graphs.

In terms of time complexity, our approach is less efficient then standard BGM, but more efficient or competitive to existing approaches using larger substructures. In Section~\ref{sec:experiments} we show, that our method outperforms these state-of-the-art approximation methods for the GED in terms of approximation quality, while being computed efficiently in practice. Thereby, our method provides a favorable trade-off between efficiency and approximation quality.

\section{Neighborhood Trees for Graph Matching}
\label{sec:mainpart}

Accurate approximation of the GED through BGM requires representations of sufficiently large and complex vertex neighborhoods, which allow the efficient computation of suitable and expressive distances.
We propose to use trees for representing neighborhoods and use the SDTED between such trees as a cost function for the assignment, see Figure~\ref{fig:overview} for an overview of our approach.
Trees capture the vertex neighborhoods well and their structure can be exploited to speed up distance computation. This leads to significant speed-ups over neighborhood subgraphs. We investigate traditional Weisfeiler-Leman unfolding trees for this purpose, which quickly grow in size with increasing height and store a large amount of redundant information. To mitigate this, we propose \emph{$k$-redundant neighborhood trees}, which are equivalent to Weisfeiler-Leman unfolding trees for sufficiently large $k$ but exhibit less redundancy for smaller values of $k$ leading to significantly more compact representations. 
Of particular interest are the $0$-redundant neighborhood trees, which represent a vertex with its surroundings by a rooted tree of shortest paths up to a maximum length restricting the radius of the surroundings. 
Bounding the tree height allows to control the trade-off between more accurate results and faster execution.

\begin{figure}[]
	\centering
	\begin{subfigure}{0.14\linewidth}
		\centering
		\includegraphics[height=0.1\textheight]{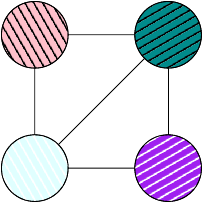}
		\subcaption{$G$}
	\end{subfigure}
	\begin{subfigure}{0.3\linewidth}
		\centering
		\includegraphics[height=0.1\textheight]{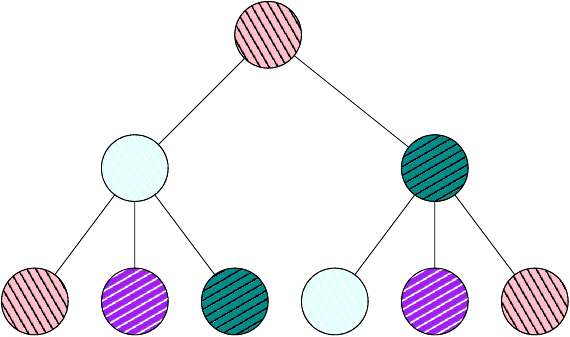}
		\subcaption{Unfolding tree}
	\end{subfigure}
	\begin{subfigure}{0.15\linewidth}
		\centering
		\includegraphics[height=0.1\textheight]{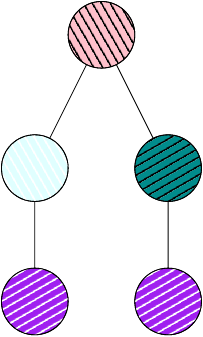}
		\subcaption{$0$-NT}
	\end{subfigure}
	\begin{subfigure}{0.15\linewidth}
		\centering
		\includegraphics[height=0.1\textheight]{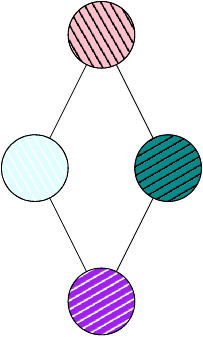}
		\subcaption{cNT}
	\end{subfigure}
\begin{subfigure}{0.2\linewidth}
	\centering
	\includegraphics[height=0.1\textheight]{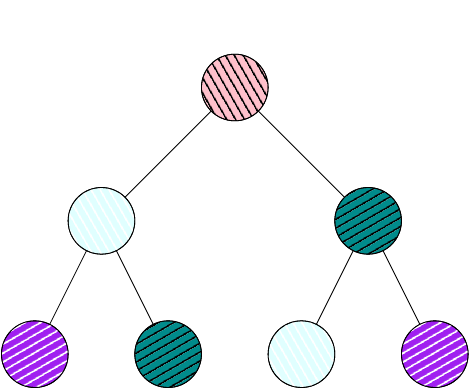}
	\subcaption{$1$-NT}
\end{subfigure}
	\caption{A graph $G$ and different tree representations with height $2$ of the neighborhood of the upper left vertex.}
	\label{fig:exampletrees}
\end{figure}

\subsection{Definitions and Properties}
We formalize the concept of neighborhood trees.
A $k$-redundant neighborhood tree can be constructed from the corresponding unfolding tree by deleting all the nodes that already occurred more than $k$ levels before together with the subtrees rooted at them.
Let $\depth(v)$ denote the length of the path from $v$ to the root and $\phi(v)$ denote the original vertex in $V(G)$ represented by $v$ in the unfolding or neighborhood tree.
\begin{definition}[$k$-redundant Neighborhood Tree, $k$-NT]\label{def:nt}
	For $k\geq 0$, the \emph{$k$-redundant neighborhood tree} of a vertex $v \in V(G)$ with height $i$, denoted by $T_{i,k}^v$, is defined as the subtree of the unfolding tree $F_{i}^v$ induced by the vertices $u \in V(F_{i}^v)$ satisfying
\begin{linenomath*}
\begin{equation*}
\forall w \in V(F_{i}^v)\colon\phi(u)=\phi(w) \Rightarrow  \depth(u)\leq \depth(w)+k.
\end{equation*}
\end{linenomath*}
\end{definition}
Note that for $k\geq i$ the $k$-redundant neighborhood tree is equivalent to the Weisfeiler-Leman unfolding tree. We call the $0$-redundant neighborhood tree simply \emph{neighborhood tree} (NT), see Figure~\ref{fig:exampletrees} for an example.
In the neighborhood tree of a vertex $v$, the edges connecting two vertices with the same shortest-path distance to $v$ are not be represented, e.g., the diagonal edge between the blue and the green vertex of graph $G$ in Figure~\ref{fig:exampletrees}. Hence, we also investigate $1$-redundant neighborhood trees ($1$-NTs), which include such edges. Note that of course, only edges in the same connected component as the vertex can be present in a neighborhood tree.

A $k$-redundant neighborhood tree may contain multiple instances of the same vertex. 
Such duplicates also occur for $k=0$, in which case both vertices necessarily appear on the same level. 
For $k=1$ each edge can only be represented at one level of the tree (but also possibly multiple times).
Therefore, these two choices of $k$ are most interesting to investigate.

Two vertices $u$ and $v$ with $\phi(u)=\phi(v)$ and $\depth(u) = \depth(v)$ are duplicates of a vertex on the same level, and following the definition, the subtrees rooted at $u$ and $v$ are isomorphic. This means, the tree contains duplicate subtrees increasing its size without providing any additional structural information.
Therefore, we propose a \emph{compressed neighborhood tree} (cNT) to avoid this redundancy.
Given two vertices $u$ and $v$ on the same level of a rooted tree $T$, by \emph{merging $u$ onto $v$} we refer to the operation that adds the edge $p(u)v$ to $T$ and deletes the subtree rooted at $u$. When merging two vertices, we assume that no vertices with lower depth can be merged.

\begin{definition}[Compressed $k$-Redundant Neighborhood Tree, cNT]\label{def:cnt}
	The \emph{compressed neighborhood tree} $C_{i,k}^v$ of $v \in V(G)$ is the reduction of $T_{i,k}^v$ obtained by merging vertices $u$ onto vertices $w$ whenever $\phi(u)=\phi(w)$ and $\depth(u)=\depth(w)$. %
\end{definition}
Note that the rooted subtrees of two vertices that are merged are isomorphic leading to a well-defined compressed neighborhood tree independent of the order in which vertices at the same level are merged.
The size and height of a cNT are bounded as follows.

\begin{theorem}\label{thm:size}
  The cNT of a $k$-redundant neighborhood tree rooted at any vertex of a connected graph $G$ has size at most $O(|E(G)|\cdot (k+1))$ and height at most $\diam(G)+k$, where $\diam(G)$ is the diameter of $G$.
\end{theorem}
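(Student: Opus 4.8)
The plan is to prove the two bounds separately, treating the height first and then the total size. For the size I count nodes and edges independently, since the merge operation may leave a node with several parents, so the cNT need not be a tree and the relation \#edges $=$ \#nodes $-1$ is unavailable.

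For the height, I would rephrase the retention condition of Definition~\ref{def:nt} in terms of shortest-path distances. In the unfolding tree $F_i^v$, a node at depth $d$ representing a vertex $x\in V(G)$ corresponds to a walk of length $d$ from $v$ to $x$, so the minimum depth at which $x$ occurs equals its shortest-path distance $d(v,x)$. Hence a node $u$ with $\phi(u)=x$ survives in $T_{i,k}^v$ exactly when $\depth(u)\le d(v,x)+k$. As $d(v,x)\le\diam(G)$ by definition of the diameter, every retained node has depth at most $\diam(G)+k$. Since merging only deletes nodes and inserts edges between already existing levels, it leaves all depths unchanged, so the height of $C_{i,k}^v$ is also at most $\diam(G)+k$.

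For the node count, the key structural fact I would establish is that after compression each vertex of $G$ occurs at most once per level: by construction all copies of a vertex $x$ sharing a common depth are merged into a single node. Combined with the rephrased retention condition, $x$ can appear only at the depths in $\{d(v,x),\dots,d(v,x)+k\}$, i.e. at no more than $k+1$ levels. Thus $C_{i,k}^v$ has at most $|V(G)|\,(k+1)$ nodes, and since $G$ is connected, $|V(G)|\le|E(G)|+1$, giving $O(|E(G)|\,(k+1))$ nodes. To bound the edges I would argue via in-degrees. Every edge of $C_{i,k}^v$ joins a node at some depth $d$ to a node at depth $d+1$ and projects to a genuine edge of $G$ between the represented vertices (original tree edges connect a vertex to a neighbour, and a merge edge $p(u)w$ inherits this because $\phi(w)=\phi(u)$). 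Fixing a node $b$ representing $y$, each of its parents sits at depth $\depth(b)-1$ and represents a neighbour of $y$; by the one-node-per-level property distinct parents give distinct neighbours, so $b$ has at most $\deg_G(y)$ parents. Summing over all nodes and using that each $y$ is represented at most $k+1$ times yields at most $(k+1)\sum_{y\in V(G)}\deg_G(y)=2(k+1)|E(G)|$ edges. Adding node and edge counts gives the claimed $O(|E(G)|\cdot(k+1))$ total size.

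The main obstacle is handling the edges correctly: because merging produces a structure that is no longer a tree, I must verify both that every cNT edge still projects onto an edge of $G$ and that per-level uniqueness after compression caps each node's in-degree by its graph degree. I would also take care with the rephrasing of the retention rule as ``$\depth(u)\le d(v,x)+k$'', which relies on the minimum occurrence depth of $x$ in $F_i^v$ equalling the shortest-path distance $d(v,x)$; this holds because $x$ appears at depth $\depth(u)\le i$, so $d(v,x)\le i$ and the shortest path is realized within $F_i^v$. The well-definedness of the merge noted before the theorem lets me ignore the order in which vertices are merged.
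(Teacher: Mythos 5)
Your proof is correct and takes essentially the same route as the paper's: the height bound comes from shortest-path depths plus at most $k$ extra levels, and the size bound from the facts that compression leaves at most one copy of each vertex per level and that each vertex can occur in at most $k+1$ levels together with its incident edges. You are in fact somewhat more careful than the paper, which simply asserts that each vertex occurs at most $k+1$ times ``together with its incident edges''; your explicit in-degree argument for the edge count and your observation that the merged structure is a DAG rather than a tree fill in details the paper leaves implicit.
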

\begin{proof}
	A cNT $C_{\infty,0}^v$ rooted at vertex $v \in V(G)$ includes all shortest paths from $v$ to any vertex $u$ which have length at most $\diam(G)$. As the parameter $k$ in $k$-NTs allows for up to $k$ repetitions of vertices in subsequent layers, at most $k$ layers are added to the compressed $k$-NT through repetition of the vertex $w \in V(G)$ where $\forall u \in V(G)\colon \depth(u)\leq \depth(w)$. Thereby, the maximum height is bounded by $\diam(G)+k$. Since no vertex can occur multiple times in a single layer due to compression and a vertex can only occur in $1+k$ consecutive layers, we can infer an upper bound on the size of any compressed $k$-NT. Any vertex $u\in V(G)$ can occur at most $k+1$ times together with its incident edges. As $|E(G)|\geq |V(G)|+1$ since $G$ is connected, the size of any $C_{i,k}^v$ is bounded by $O(|E(G)|\cdot (k+1))$.
\end{proof}
The above theorem holds for connected graphs. If the graph $G$ consists of multiple connected components, each can be processed independently and the theorem applies for each connected component. 
Restricting the maximum height reduces the size of the tree further by pruning deeper levels.
The compressed tree $C_{i,k}^v$ can be created directly instead of deriving it from the full tree by merging vertices, making its creation much more efficient.

\begin{algorithm}[tb]
	\caption{Creation of compressed $k$-NT (cNT)}\label{alg:nt_creation}
	\begin{algorithmic}
		\Function{buildNT}{Graph $G$, vertex $w$, height $h$, $k$}
		\State $T \gets \text{cTree(}w\text{)}$ \Comment{initialize cNT with root $w$}
		\State $D\gets$ empty dictionary \Comment{initialize depth dictionary} 
		\State $D(w) \gets 0$
		\For{$i \gets  1, \dots, h$ }
		\State $F \gets$ empty dictionary \Comment{initialize found dictionary}
		\ForEach {$v \in L(T)$}
		\ForEach {$u \in N(\phi(v))$}
		\If{$u$ not in $D$} \Comment{new vertex}
		\State $D(u) \gets i$
		\EndIf
		\If{$D(u) + k \geq i$}
		\If{$u$ not in $F$} \Comment{new on this depth}
			\State add new vertex $c$ to $V(T)$
			\State $\phi(c) \gets u$ 
			\State $F(u) \gets c$
		\EndIf
		\State add new edge $vF(u)$ to $E(T)$
		\EndIf
		\EndFor
		\EndFor
		
		\EndFor
		\Return $T$
		\EndFunction
		
	\end{algorithmic}
\end{algorithm}

\subsection{Creating $k$-Redundant Neighborhood Trees}
\label{sec:creating_nt}
Algorithm~\ref{alg:nt_creation} shows how to generate compressed $k$-redundant neighborhood trees according to Definitions~\ref{def:nt} and~\ref{def:cnt} with a maximum height $h$. The maximum height is naturally bounded by the diameter of $G$. Hence, we set the parameter $h$ to $\min\{h, \diam(G)+k\}$ and assume $h \leq \diam(G)+k$ when the algorithm is applied.
For the creation of a tree the chosen vertex is set as its root and then the neighborhood of this vertex is explored level-wise in a breadth-first search fashion. We build the cNT by keeping track of the first level at which a vertex $v$ was found through $D(v)$, and whether a vertex was already found at the current level through $F(v)$. Then a corresponding vertex has already been created and only the missing edge is inserted. It continues until the maximum height $h$ is reached.
With these techniques Algorithm~\ref{alg:nt_creation} constructs a compressed neighborhood tree of a given vertex in time $O(|E(G)| \cdot (k+1))$.

\subsection{Neighborhood Tree Edit Distance}
We compare pairs of (compressed) neighborhood trees using the SDTED, see Section~\ref{pre:sdted}, to obtain a distance between vertex pairs reflecting the dissimilarity of the neighborhoods for use in the BGM framework. Vertices further away should have a smaller influence on the similarity of two vertices than the vertices themselves or their direct neighbors. Therefore, we introduce a sequence of weights $\bm{\lambda}=(\lambda_1, \lambda_2, \dots)$, where $\lambda_i$ controls the contribution the $i$th level of the neighborhood tree. We multiply the edit costs of the nodes at depth $i$ by $\lambda_i$. We chose $\lambda_i = w^i$ for $0\leq w\leq 1$ and obtain the unweighted case for $w=1$. We chose $w < 1$ to reduce the influence of deeper levels, which correspond to vertices further away.

The SDTED can be computed recursively starting at the root vertices and solving optimal assignments between their children, where the cost for matching two children depends on the SDTED of their subtrees. Schulz et al.~\cite{80_generalizedWLkernel} used this approach with memoization to obtain a running time of $O(nn'h\Delta^3)$ for two Weisfeiler-Leman unfolding trees with $n$ and $n'$ vertices, height $h$ and a maximum degree of $\Delta$.
We improve the running time by introducing techniques for efficient tree matching into the SDTED computation and refining the analysis.
\begin{theorem}\label{thm:runtime}
 The SDTED of two trees with $n$ and $n'$ vertices and maximum degree $\Delta$ can be computed in $O(nn'\Delta)$ time.
\end{theorem}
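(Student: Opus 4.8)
The plan is to organize the computation as a dynamic program over pairs of subtrees and then charge its cost level by level. First I would observe that conditions (ii) and (iii) in the definition of a structure and depth preserving mapping force every mapped pair $(u,u')$ to satisfy $\depth(u)=\depth(u')$: the roots are mapped to each other, and a node can only be mapped if its parent is, so equal depth follows by induction. Consequently, the only subproblems that ever arise in the recursion of Algorithm~\ref{alg:assign_simple} are the values $\sdted(T_u,T_{u'})$ for subtrees rooted at a pair $(u,u')$ at the \emph{same} depth, and with memoization each such value is computed exactly once. Writing $n_d$ and $n'_d$ for the number of nodes at depth $d$ in $T$ and $T'$, the number of subproblems is at most $\sum_d n_d n'_d \le (\sum_d n_d)(\sum_d n'_d)=nn'$. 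I would also precompute, in one bottom-up pass in $O(n+n')$ time, the cost of deleting (resp.\ inserting) each entire subtree, so that every placeholder entry of a cost matrix is available in constant time.

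Next I would bound the work at a single subproblem $(u,u')$, writing $d_u=|\children(u)|$ and $d_{u'}=|\children(u')|$. Building the matrix $\C$ costs $O(d_u d_{u'})$, since each genuine entry is a memoized subtree distance and each placeholder entry is a precomputed deletion/insertion cost. The key step is to solve the resulting assignment not with the generic cubic Hungarian method on the padded $\max(d_u,d_{u'})\times\max(d_u,d_{u'})$ matrix, but as an \emph{unbalanced} assignment: a matching pairs at most $\min(d_u,d_{u'})$ children, while the surplus children on the larger side are necessarily deleted or inserted at their precomputed subtree cost (not matching a child is always an admissible option of that cost). Running the shortest-augmenting-path (Hungarian) method in this rectangular setting performs only $\min(d_u,d_{u'})$ augmentation phases, each over $O(\max(d_u,d_{u'}))$ vertices, giving time $O\!\left(\max(d_u,d_{u'})\cdot d_u d_{u'}\right)\le \Delta\, d_u d_{u'}$.

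I would then sum these contributions via a telescoping identity. Grouping by depth and using that the children of the nodes at depth $d$ are exactly the nodes at depth $d+1$, i.e.\ $\sum_{\depth(u)=d} d_u = n_{d+1}$, I obtain
\begin{linenomath*}
\begin{equation*}
\sum_{(u,u')\text{ same depth}} d_u d_{u'} = \sum_d n_{d+1}n'_{d+1} \le \sum_d n_d n'_d \le nn',
\end{equation*}
\end{linenomath*}
so the grand total of the cost-matrix construction and all assignments is $\sum_{(u,u')} O(\Delta\, d_u d_{u'}) = O(nn'\Delta)$, matching the claim.

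The main obstacle is the assignment step. Applying the cubic Hungarian method to the padded square matrix would only yield $O(nn'\Delta^2)$; in fact one can construct trees (a short, high-degree caterpillar) for which $\sum_{(u,u')}\max(d_u,d_{u'})^3$ exceeds $nn'\Delta$ by a factor of $\Delta$, so the square formulation genuinely cannot give the bound. The improvement therefore hinges entirely on exploiting that the number of genuinely matched children is bounded by the \emph{smaller} degree, and I would spend most of the care on justifying the unbalanced-assignment running time $O(\max(d_u,d_{u'})\cdot d_u d_{u'})$ and on verifying that restricting the matching to the smaller side, with all remaining children deleted or inserted, loses no optimality.
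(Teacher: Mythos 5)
Your proposal is correct and follows essentially the same route as the paper's proof: the decisive step in both is replacing the padded square assignment by an unbalanced (rectangular) one solved in time proportional to the number of matrix entries times the smaller side, which is bounded by $\Delta$, and then summing $d_u d_{u'}$ over all subproblem pairs to obtain $nn'$. The extra details you supply (same-depth restriction via conditions (ii)--(iii), precomputed subtree deletion costs, the telescoping sum by depth) are consistent with, and slightly more explicit than, the paper's argument.
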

\begin{proof}
Consider two trees $T=(V,E)$, $T'=(V,E')$ with $n$ and $n'$ vertices, respectively.
We have to solve a linear assignment problem for each pair of vertices $(i,j)\in V\times V'$ with $\depth(i)=\depth(j)$. Let $n_i$ and $n_j$ denote the number of children of $i$ and $j$, respectively. Introducing placeholder vertices for insertion and deletion leads to a squared cost matrix with $n_i+n_j$ rows and columns~\cite{80_generalizedWLkernel}. However, we can find an equivalent solution using the reduction of~Bougleux et al.~\cite{BougleuxGBB20} resulting in an $n_i \times n_j$ cost matrix $\C$. The arising problems can be solved by maximum weight matching algorithms for unbalanced bipartite graphs in time $O(ms + s^2 \log s)$, where $m$ is the number of edges, i.e., non-zero entries in $\C$, and $s$ the smaller vertex set~\cite{Ramshaw2012}. From this we obtain an upper bound of $O(n_i n_j \Delta)$ for our setting, since $m\leq n_i n_j$ and $s=\min\{n_i,n_j\} \leq \Delta$. This leads to a total running time of
\begin{linenomath*}
\begin{equation*}
 \sum_{i\in V} \sum_{j\in V'} O(n_i n_j \Delta) = 
  O\left(\sum_{i\in V} n_i \sum_{j\in V'} n_j \Delta \right) = 
  O(nn'\Delta).
\end{equation*}
\end{linenomath*}
\end{proof}
In the case of integral edit costs bounded by $K$, the costs in all assignment problems are bounded by $C=K(n+n')$. We can use the bipartite matching algorithm by~Goldberg et al.~\cite{GoldbergHKT17} to obtain a total running time of $O(nn'\sqrt{\Delta}\log(\Delta C))$ using the same arguments.

\subsection{Optimization}
\label{sec:caching}
We describe implementation details and techniques to speed up the computation in practice.

The most significant acceleration is achieved by storing and reusing the results of SDTED computations for neighborhood trees and subtrees.
To this end, we use canonical string encodings of trees, which are equal for two trees if and only if the trees are isomorphic.
While no polynomial-time algorithm is known for the graph canonization problem~\cite{DBLP:journals/cacm/GroheS20}, canonical encodings of unlabeled trees can be computed in linear-time by classical algorithms~\cite{Aho1974}, which sort siblings level-wise in a bottom-up fashion. While for unlabeled graphs sorting can be realized in linear-time using bucket sort, for arbitrary label alphabets $O(n \log n)$ time is required to order the tree unambiguously. From this, a string encoding is derived via tree traversal serving as a unique identifier.
In our method, this technique is used for accessing a hash-based data structure storing the SDTED between trees by using tuples of canonical encodings as key.

We implemented the tree canonization procedure such that it generates canonical encodings of all subtrees as a byproduct, which are also cached in the same way. This means that, not only the encodings of subtrees can be reused for computing the encoding of all parents, but also the repeated computation of SDTED between subtrees can be avoided. %
Since Weisfeiler-Leman unfolding trees have a highly repetitive structure, they are expected to greatly benefit from caching.

\subsection{Deriving an Edit Path}
\label{sec:editpath}
Based on the SDTED between their neighborhood trees we obtain a cost matrix for computing a minimum assignment between the vertices of the two graphs.
From the assignment an upper bound for the graph edit distance is then derived yielding our approximation. 

To finally obtain an approximation for the graph edit distance, the same methodology is applied as described in the initial publication using bipartite graph matching~\cite{27_Riesen}, also see Section~\ref{sec:preliminaries:bgm}. The mapping, resulting from the vertex assignment via their neighborhood trees, is seen as a sub-optimal edit path. Vertices, that are mapped to each other incur no cost, if they have the same label, or relabeling cost, if they differ in label. Deletion/insertion cost occur, if a vertex is not mapped to a vertex of the other graph. With this, the rest of the edit path, more precisely the edge edit costs, are inferred from the difference in edges between pairs of mapped vertices. As the overall path derived from this assignment is not guaranteed to be optimal, the cost of the derived path must be greater or equal to the cost of an optimal one. For this reason, our approximation, just like the comparison method, is an upper bound for the graph edit distance.

\subsection{Theoretical Complexity}
We investigate the theoretical complexity of our approach, which is dominated by the time needed for creating the cost matrix $\C$.
\begin{theorem}\label{thm:complexity}
	Given two graphs with $n$ and $n'$ vertices, $m$ and $m'$ edges, and maximum degree $\Delta$.
	The running time for computing the cost matrix $\C$ based on the SDTED for cNTs with constant $k$ is $O(nn'mm'\Delta)$.
\end{theorem}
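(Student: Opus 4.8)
The plan is to read the running time directly off the two structural results already established: the size bound for compressed neighborhood trees (Theorem~\ref{thm:size}) and the refined SDTED analysis (Theorem~\ref{thm:runtime}). The cost matrix $\C$ has $O(nn')$ substantive entries $c_{ij}$, each equal to the SDTED between the cNT of vertex $i \in V(G)$ and the cNT of vertex $j \in V(H)$; the remaining $\epsilon$-columns only hold whole-tree deletion costs and are cheaper. The strategy is therefore to bound the cost of a single SDTED evaluation and multiply by the number of entries.

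First I would pin down the two ingredients. By Theorem~\ref{thm:size}, every cNT rooted at a vertex of $G$ has size $O(m(k+1))$, which is $O(m)$ since $k$ is constant; likewise every cNT of $H$ has size $O(m')$. Moreover, the number of children of any node of a cNT is at most $\Delta$, because the children of a node representing $\phi(v)$ correspond to (a subset of) the graph-neighbors $N(\phi(v))$, and merging only redirects a parent edge $p(u)v$ without adding children. Hence both cNTs have maximum degree $O(\Delta)$, and substituting the two size bounds and this degree bound into Theorem~\ref{thm:runtime} gives a single SDTED cost of $O(mm'\Delta)$.

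Summing over all $O(nn')$ entries then yields $O(nn'mm'\Delta)$. Since the statement concerns only the construction of $\C$, I would finish by checking that the remaining work is dominated: building all cNTs costs $O(nm + n'm')$ via the $O(m(k+1))$ bound on Algorithm~\ref{alg:nt_creation} per vertex, and each whole-tree deletion cost in the $\epsilon$-columns is obtainable in $O(m)$ time per vertex without recursion. Both are asymptotically below $O(nn'mm'\Delta)$, which establishes the claim.

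The main obstacle is the degree argument, not the arithmetic. I would take care to confirm that compression cannot inflate the branching handled by the assignment subproblems underlying Theorem~\ref{thm:runtime}: although merging can give a node several parents (so a cNT is, strictly, a shared-subtree structure rather than a genuine rooted tree), the recursion still solves one linear assignment per matched node pair whose dimension is governed by the number of \emph{children}, which stays $\le \Delta$; and the telescoping used in Theorem~\ref{thm:runtime}, namely $\sum_i n_i$ bounded by the number of edges of the tree, carries over with that quantity now being $O(m)$ for the compressed representation. Verifying that Theorem~\ref{thm:runtime} applies verbatim to cNTs, and not only to full unfolding trees, is the one point that genuinely requires care.
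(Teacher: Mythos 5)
Your proposal follows essentially the same route as the paper's own proof: bound each cNT's size by $O(m)$ and $O(m')$ via Theorem~\ref{thm:size}, apply Theorem~\ref{thm:runtime} to get $O(mm'\Delta)$ per entry, and multiply by the $O(nn')$ entries of $\C$. Your extra care about whether Theorem~\ref{thm:runtime} applies verbatim to compressed trees (where merging can give a node multiple parents) is a legitimate point the paper glosses over, but it does not change the argument or the result.
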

\begin{proof}
	For each entry of the cost matrix the SDTED between two cNTs is computed. According to Theorem~\ref{thm:size} the size of the cNTs is bounded by $O(m)$ and $O(m')$, respectively. This leads to a running time of $O(mm'\Delta)$ for the SDTED computation with Theorem~\ref{thm:runtime}. Since the cost matrix has $n \cdot n'$ entries, we obtain a total running time of $O(nn'mm'\Delta)$.
\end{proof}
In bounded-degree graphs, where $\Delta$ is constant and the number of edges is a constant multiple of the number of vertices, we obtain a running time of $O((nn')^2)$. 

In Table~\ref{tab:complexity} we compare the theoretical complexity of commonly used approximations for the graph edit distance based on the bipartite graph matching method. We only show the cost to compute the cost matrix $\C$, since the further steps (computing the assignment and deriving the edit path from it) do not differ between the methods. While \emph{BGM} is theoretically the fastest, the substructure that is captured in the cost is not sufficient for good approximation quality in some cases. Note that with increasing radius \emph{Subgraph} is much slower, than the other approaches, since computing the exact graph edit distance between the subgraphs is computationally complex. Compared to the \emph{Walk} method, our approach promises to be more efficient for graphs of bounded degree. We investigate the empirical running time of these methods in Section~\ref{sec:experiments}. 

\section{Experimental Evaluation}
\label{sec:experiments}
We compare our newly proposed technique to state-of-the-art approaches regarding approximation quality and runtime.
Specifically, we address the following research questions:

\begin{itemize}
	\item[\textbf{Q1}] How tight are the upper bounds of our method compared to the state-of-the-art? 
	\item[\textbf{Q2}] How do our bounds perform when taking the trade-off between bound quality and runtime into account?
	\item[\textbf{Q3}] How much of a speed-up is gained by our caching strategy?
\end{itemize}

\subsection{Setup}
\label{subsec:setup}
This section gives an overview of the datasets, the methods used in the experimental comparison and their implementation and configuration.

\subsubsection{Methods and Distance Functions}

We compare our approach to the state-of-the-art bipartite graph matching methods \emph{BGM}~\cite{27_Riesen}, its extensions using bags of walks (denoted by \emph{Walks} or \emph{W})~\cite{bagsofwalks} and subgraphs (denoted by \emph{Subgraph} or \emph{SG})~\cite{10.1007/978-3-319-18224-719}. The competitors employ the same framework as our approach, but offer different trade-offs between quality and running time. Other approximation algorithms are typically orders of magnitude slower.
In addition to the approximation methods we used the state-of-the-art exact GED method \emph{BSS\_GED}~\cite{83_bssged}.

For our method, we compare the assignments under the SDTED using the traditional unfolding trees \emph{WL} and neighborhood trees, where we investigate both $0$-NTs (\emph{NT}) and $1$-NTs ($1$-\emph{NT}), while varying the height parameter for all trees. We employ the optimizations described in the previous section and apply compression the trees generated, including WL unfolding trees. 

The parameter $w$, which determines how the edit costs of distant nodes are weighted down in the SDTED computation was set to $0.5$ for all datasets, which was found to yield good results in preliminary experiments.
Since the exact approach can handle uniform edit costs only, we use these in our experiments.

\subsubsection{Implementation}
We implemented our newly proposed methods, as well as \emph{BGM} and its extensions, in Java. For our competitors methods, we followed the implementations presented by the authors. We used the C\texttt{++} implementation of \emph{BSS\_GED} provided by the authors. All experiments were conducted on an Intel Xeon Gold 6130 machine at 2.1 GHz with 96 GB RAM. We report average execution time over 5 runs. %

\subsubsection{Datasets}
\begin{table}[tb]\centering
	\small
	\caption{Characteristics of the datasets.}
	\label{tab:datasets}
	\setlength{\tabcolsep}{0.18cm}
	\begin{tabular}{lrrrrrr}
		\toprule
		\textbf{Name} & \boldmath\textbf{$\vert$G$\vert$} & \boldmath\textbf{$\overline{\vert V\vert}$} & \boldmath\textbf{$\overline{\vert E \vert}$} & \boldmath\textbf{$\vert L_V\vert$}& \boldmath\textbf{$\vert L_E\vert$}&\textbf{$\overline{diam.}$}\\
		\midrule
		\textit{Letter-med}	&	$2250$ &	$4.67$   & $4.50$ & $-$    & $-$  & $2.44$\\
		\textit{MUTAG}   &	$188$  &	$17.93$ & $19.79$ & $7$    & $4$ & $8.22$\\	
		\textit{PTC\_FM} &	$349$  &	$14.11$ & $14.48$ & $18$  & $4$  & $7.37$\\
		\textit{Protein Com}    & $1000$ &    $10.19$ & $9.19$  & $622$ & $-$  & $4.76$\\
		\textit{MSRC\_9} &  	$221$  &  $40.58$  & $97.94$ & $10$  & $-$  & $7.00$\\
		\textit{NCI1}      & $4110$  &  $29.87$  & $32.30$ & $37$  & $-$  & $13.33$\\
		\bottomrule
	\end{tabular}
\end{table}

We tested all methods on a wide range of real-world datasets from the TUDataset collection~\cite{Datasets} with different characteristics, see Table~\ref{tab:datasets}. 
These datasets are widely used for graph similarity computation and have discrete vertex and edge labels. Attributes, if present, were removed prior to the experiments since not all methods support them. We randomly sampled $1000$ graphs from the dataset \textit{Protein Com}~\cite{Stoecker2019}.

\subsection{Results}
\label{subsec:algorithmsInComparison}

In the following, we report on our experimental results and discuss the different research questions.

\subsubsection{Q1: Bound Quality}
Tight bounds are essential, especially when the exact graph edit distance is small. In this section we compare our new approaches with state-of-the-art bipartite graph matching. We first investigate how many iterations are needed for the different tree structures to achieve the best accuracy.
\begin{figure}[tb]
	\centering
	\includegraphics[width=0.66\linewidth]{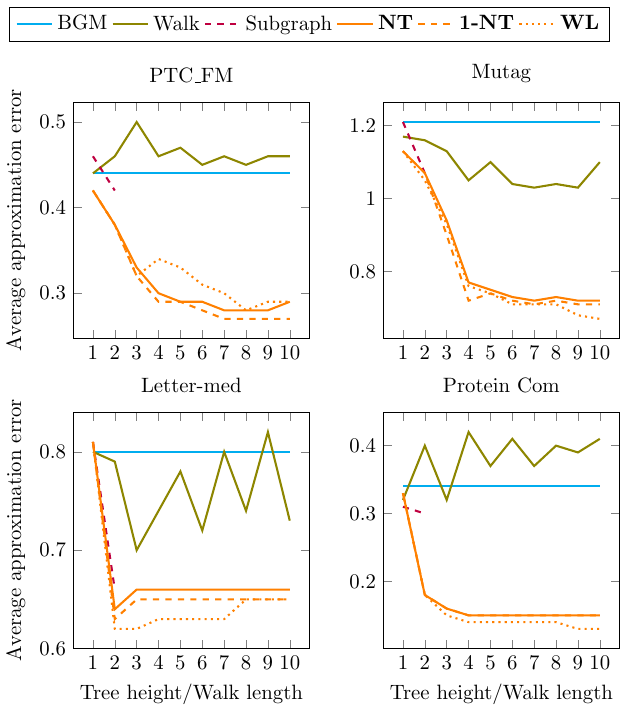}
	\caption{Average relative approximation error of newly proposed (bold) and state-of-the art methods regarding the exact graph edit distance on the different datasets.}
	\label{fig:errorplot}
\end{figure}
For a dataset $\mathcal{D}=\{G_1, G_2, \dots, G_N\}$ we select $n=100$ pairs $(G_i,G_j)$ with $i\neq j$ uniformly at random to obtain a set of pairs $\mathcal{P}=\{(G_1,H_1),(G_2,H_2),\dots,(G_n,H_n)\} \subseteq \mathcal{D}^2$. For each pair we compute the exact graph edit distance $d_i=\ged(G_i,H_i)$ and the result of the approximation method $\hat{d}_i$. We consider the average relative error for this set of pairs, i.e.,
\begin{equation*}
 R=\frac{1}{n}\sum_{i=1}^n \frac{|d_i - \hat{d}_i|}{d_i}.
\end{equation*}
The average relative error of the different methods is shown in Figure~\ref{fig:errorplot}. Since \emph{BGM} does not have any parameter to adjust the tree height, walk length or subgraph radius, the approximation error is constant and it is used as a baseline.
While we gain a huge advantage initially in all datasets, the approximations do not get much better after a certain tree height. An explanation for this is that many graphs are already fully explored by NTs of small height and not many vertices are added in later refinement steps.
The \emph{Walk} method performed sometimes worse and sometimes better than the baseline \emph{BGM}, while always performing worse than our approach. Additionally, the method has unpredictably varying accuracy with increasing walk length, often even leading to degraded accuracy.
The \emph{Subgraph} method could only be tested with a subgraph radius of $1$ and $2$ due to the huge increase in runtime (see Figure~\ref{fig:runtimeplot}). For these parameter settings, the results were comparable to or worse than our method. It can be seen, that $1$-NTs are only slightly better than $0$-NTs in some cases, and the performance of \emph{WL} is also comparable to \emph{NT}.

\begin{figure}[tb]
	\centering
		\includegraphics[width=0.6\linewidth]{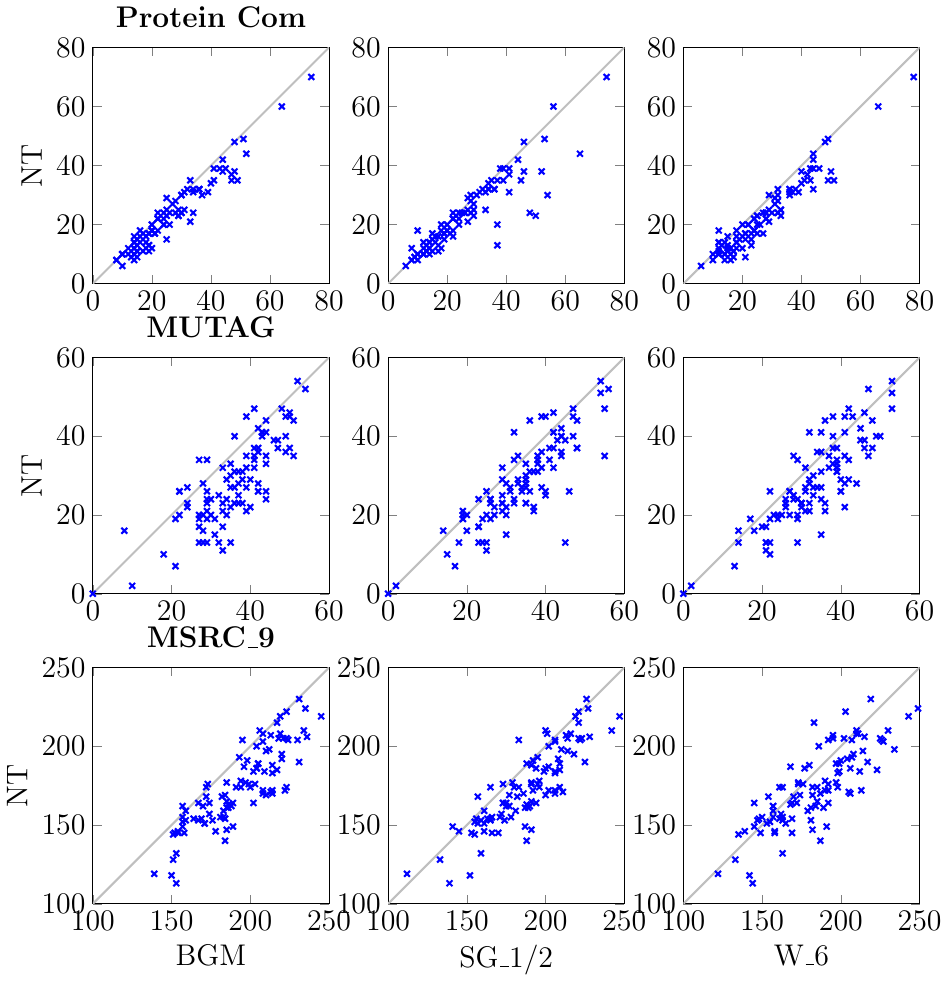}
	\caption{Approximated graph edit distance of \emph{NT\_10} in comparison to the different state-of-the-art methods on the datasets \textit{Protein Com}, \textit{MUTAG and} \textit{MSRC\_9}. The methods are denoted with variant\_parameter.}
	\label{fig:approxscatterMUTAG}
\end{figure}

We compare our approximation \emph{NT} with height $10$ directly to the different state-of-the-art methods on an instance level in Figures~\ref{fig:approxscatterMUTAG} and~\ref{fig:approxscatterNCI}. We denote the comparison methods with variant\_parameter. Points that are below the gray line indicate a better approximation by our approach. Since almost all points are below this line, we can conclude, that our method performs generally better, even on datasets with larger graphs. Especially, when the (approximated) distance is small, \emph{NT} performs much better in many cases. A high accuracy for small distances is very important, for example, in $k$-nearest neighbors search, which is commonly used in data mining tasks such as $k$-nn classification.

\begin{figure}[tb]
	\centering
	\includegraphics[width=0.6\linewidth]{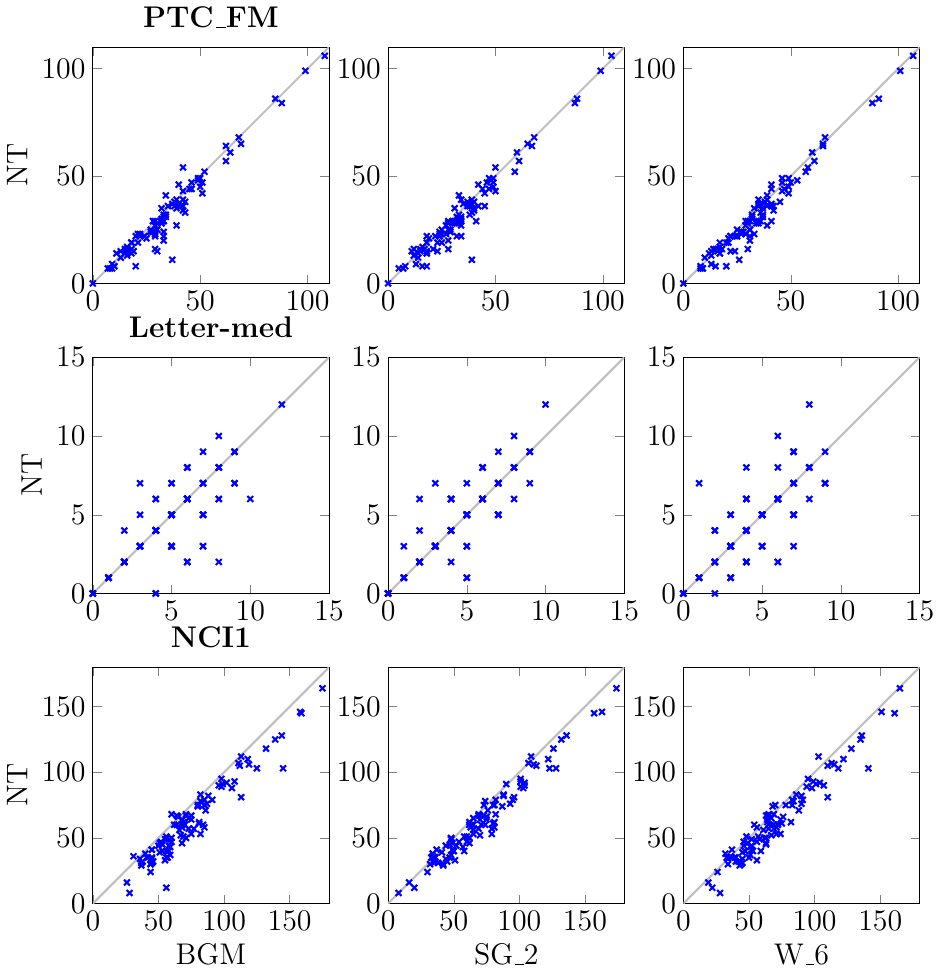}
	\caption{Approximated graph edit distance of \emph{NT\_10} in comparison to the different state-of-the-art methods on the datasets \textit{PTC\_FM}, \textit{Letter-med} and \textit{NCI1}. The methods are denoted with variant\_parameter.}
	\label{fig:approxscatterNCI}
\end{figure}

\subsubsection{Q2: Runtime}
There is typically a trade-off between runtime and accuracy. In this section we evaluate our proposed approaches in terms of runtime. 
Figure~\ref{fig:runtimeplot} shows the average runtime for computing the distance between two graphs using the different methods. Exact computation is much slower than most approximations and  shown as a baseline (for the datasets \textit{MSRC\_9} and \textit{NCI1} no values are given, since the graphs are already too large for this method).
Since \emph{BGM} only computes one larger assignment (and one small assignment for each vertex pair), we can expect it to be faster than our newly proposed methods.
In practice we see that the runtime difference between \emph{BGM} and \emph{NT}/\emph{$1$-NT} is quite small, while our \emph{WL} variant is much slower.
Since the unfolding trees in the \emph{WL} variant contain many redundant vertices and grow very fast in size with increasing height, this is in accordance with our expectation.
For \textit{MSRC\_9} and \textit{NCI1} the \emph{WL} variant is slightly faster than \emph{NT}/\emph{$1$-NT} for lower heights $h$. This can be explained by the impact of our caching strategy, since unfolding trees contain many isomorphic subtrees allowing to skip SDTED computation whenever they reoccur.
The plateauing of the runtime for \emph{NT} and \emph{$1$-NT} can be explained by the bounded size of the neighborhood trees. After every edge in the graph is explored, the tree will not grow in size anymore. The \emph{Subgraph} method becomes very slow when increasing the radius of the subgraphs, to the point where exact GED computation is faster than the approximate method. For this reason, we did not test subgraph radii of $3$ and larger. The runtime of \emph{Walk} varies a lot between the datasets relative to the other methods. Seemingly, \emph{Walk} is fastest when the number of distinct labels is large, due to the small number of matching node pairs in the method's product graph. While \emph{Walk} was the fastest method other than \emph{BGM} for the Protein Com dataset, it is important to note, that it performed worse than the baseline \emph{BGM}.
Interestingly, \emph{$1$-NT} performs much worse on dataset \textit{MSRC\_9} than \emph{NT}, while on the other datasets, there was not much of a difference between them. This could be due to the higher average vertex degree in this dataset, since \emph{$1$-NT} has some redundancy in the tree (by allowing vertices to appear in two consecutive levels). This, and the fact that not much accuracy is gained from using \emph{$1$-NT} over \emph{NT}, suggests that $0$-NTs are sufficient for representing the neighborhood of a vertex accurately and  efficiently.

\begin{figure}[tb]
	\centering
	\includegraphics[width=0.77\linewidth]{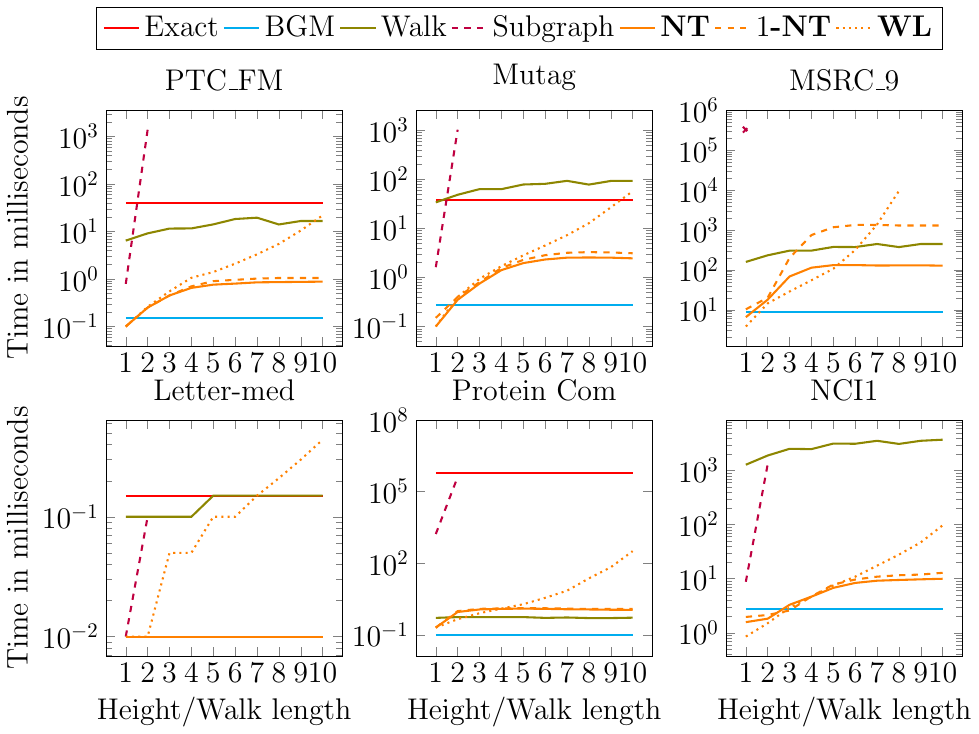}
	\caption{Average runtime for computing the distance between two graphs (proposed in bold).}
	\label{fig:runtimeplot}
\end{figure}

\subsubsection{Q3: Caching}
\label{sec_caching}
In our experiments we evaluate the speed-up in runtime gained using our caching strategy for the different tree variants. For a fair comparison, we do not store the cached values for the whole dataset, but only during the pairwise approximation. If the values were kept over the computation of all or multiple approximations additional speed-up at the cost of memory may be gained whenever subtree pairs reoccur for different graph pairs.

\begin{figure}[tb]
	\centering
	\includegraphics[width=0.66\linewidth]{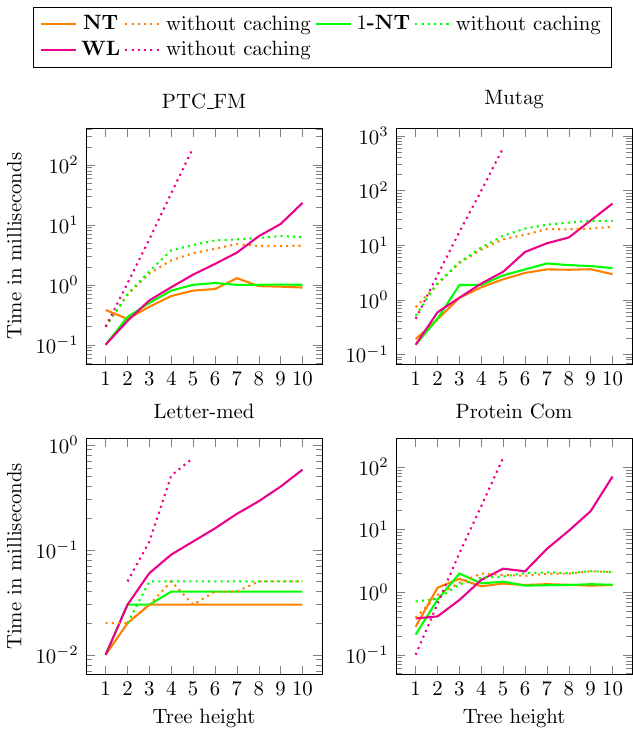}
	\caption{Average runtime in milliseconds for computing the distance of two graphs using our methods with and without our caching.}
	\label{fig:cachetimeplot}
\end{figure}
Figure~\ref{fig:cachetimeplot} shows the average runtime for computing our approximations of the graph edit distance with and without using our caching strategy and varying tree height. Since the runtime for the uncached \emph{WL} increased very quickly with larger height, only values up to a tree height of $5$ are given. On almost all datasets we can see a huge runtime benefit for all methods. Because the size of the \emph{WL} trees is not limited by the graph size and quickly grows, this method benefits the most from caching. The other two methods benefit more on datasets with higher average degree and larger diameter. One reason for this might be the higher number of redundant computations occurring in these graphs, which can be avoided by our caching strategy.

\subsubsection{Discussion}
The experiments showed that, while our approach provides tighter bounds for the graph edit distance, it is only marginally slower than state-of-the-art bipartite graph matching. Especially for graphs with small diameter, a small tree height is often sufficient. We observed, that in contrast to bags of walks, where large walk lengths can decrease the method's accuracy due to the ``tottering phenomenon''~\cite{bagsofwalks}, the accuracy of our method increases with larger tree height. This indicates, that a good choice for our height parameter $h$ can be easily and intuitively found by a user, as it trades accuracy for runtime. Our experiments have shown that not limiting $h$ still results in a fast method, as the increase in computational complexity arising from the parameter $h$ is bounded by the diameter of the graphs. 

\section{Conclusions}
\label{sec:conclusion}
We introduced a new approximation for the graph edit distance based on bipartite graph matching. For the vertex assignment, neighborhood trees and their edit distance are used.
Thereby, our approach takes complex structural information around the vertices into account without performing expensive exact GED computations for local subgraphs. As our trees can be limited in height, an individual trade-off between runtime and accuracy can be achieved. 
We analyzed the running time of the SDTED computation and improved it to $O(n^2\Delta)$ for two trees with $n$ vertices and maximum degree~$\Delta$.
Moreover, we proposed to accelerate the computation by using compact compressed tree representations and tree canonization to avoid redundant computations.
In our experimental evaluation we showed that the proposed method is often only slightly slower than standard bipartite graph matching, while providing great benefits regarding approximation accuracy.

In future work we will investigate other practical applications of our method, such as  database search, where storing the computed SDTEDs globally seems promising. Furthermore, our method can be applied to molecular graphs for structure-based virtual screening of ligands~\cite{doi:10.1021/acs.jcim.8b00820}. The improved approximation quality with only small overhead is promising especially when working with larger molecules such as macrocyclic compounds~\cite{SENGUPTA20201851}. 
The Weisfeiler-Leman algorithm is fundamental for graph learning methods such as graph kernels and graph neural networks~\cite{DBLP:conf/ijcai/0001FK21}. Various applications in this field could potentially benefit from replacing Weisfeiler-Leman unfolding trees by more compact neighborhood trees.

\section*{Acknowledgements}
This work was supported by the Vienna Science and Technology Fund (WWTF) [10.47379/VRG19009]. 
\bibliography{lit}

\begin{thebibliography}{41}
\providecommand{\natexlab}[1]{#1}
\providecommand{\url}[1]{\texttt{#1}}
\expandafter\ifx\csname urlstyle\endcsname\relax
  \providecommand{\doi}[1]{doi: #1}\else
  \providecommand{\doi}{doi: \begingroup \urlstyle{rm}\Url}\fi

\bibitem[Abu-Aisheh et~al.(2015)Abu-Aisheh, Raveaux, Ramel, and
  Martineau]{10.5220/0005209202710278}
Z.~Abu-Aisheh, R.~Raveaux, J.-Y. Ramel, and P.~Martineau.
\newblock An exact graph edit distance algorithm for solving pattern
  recognition problems.
\newblock In \emph{Proceedings of the International Conference on Pattern
  Recognition Applications and Methods - Volume 1}, ICPRAM 2015, page
  271–278, 2015.
\newblock \doi{10.5220/0005209202710278}.

\bibitem[Aho and Hopcroft(1974)]{Aho1974}
A.~V. Aho and J.~E. Hopcroft.
\newblock \emph{The Design and Analysis of Computer Algorithms}.
\newblock Addison-Wesley Longman Publishing Co., Inc., Boston, MA, USA, 1974.

\bibitem[Babai and Kucera(1979)]{Bab+1979}
L.~Babai and L.~Kucera.
\newblock Canonical labelling of graphs in linear average time.
\newblock In \emph{Symposium on Foundations of Computer Science}, pages 39--46,
  1979.

\bibitem[Bai et~al.(2019)Bai, Ding, Bian, Chen, Sun, and Wang]{simgnn}
Y.~Bai, H.~Ding, S.~Bian, T.~Chen, Y.~Sun, and W.~Wang.
\newblock {SimGNN}: {A} neural network approach to fast graph similarity
  computation.
\newblock In \emph{ACM International Conference on Web Search and Data Mining},
  2019.
\newblock \doi{10.1145/3289600.3290967}.

\bibitem[Blumenthal and Gamper(2018)]{92_improved}
D.~B. Blumenthal and J.~Gamper.
\newblock Improved lower bounds for graph edit distance.
\newblock \emph{IEEE Transactions on Knowledge and Data Engineering},
  30\penalty0 (3):\penalty0 503--516, 2018.

\bibitem[Blumenthal and Gamper(2020)]{BLUMENTHAL202046}
D.~B. Blumenthal and J.~Gamper.
\newblock On the exact computation of the graph edit distance.
\newblock \emph{Pattern Recognition Letters}, 134:\penalty0 46--57, 2020.
\newblock ISSN 0167-8655.
\newblock \doi{https://doi.org/10.1016/j.patrec.2018.05.002}.

\bibitem[Blumenthal et~al.(2020)Blumenthal, Boria, Gamper, Bougleux, and
  Brun]{29_ged_heuristics}
D.~B. Blumenthal, N.~Boria, J.~Gamper, S.~Bougleux, and L.~Brun.
\newblock Comparing heuristics for graph edit distance computation.
\newblock \emph{{VLDB} J.}, 29\penalty0 (1):\penalty0 419--458, 2020.

\bibitem[Bougleux et~al.(2020)Bougleux, Ga{\"{u}}z{\`{e}}re, Blumenthal, and
  Brun]{BougleuxGBB20}
S.~Bougleux, B.~Ga{\"{u}}z{\`{e}}re, D.~B. Blumenthal, and L.~Brun.
\newblock Fast linear sum assignment with error-correction and no cost
  constraints.
\newblock \emph{Pattern Recognit. Lett.}, 134:\penalty0 37--45, 2020.

\bibitem[Carletti et~al.(2015)Carletti, Ga{\"u}z{\`e}re, Brun, and
  Vento]{10.1007/978-3-319-18224-719}
V.~Carletti, B.~Ga{\"u}z{\`e}re, L.~Brun, and M.~Vento.
\newblock Approximate graph edit distance computation combining bipartite
  matching and exact neighborhood substructure distance.
\newblock In \emph{Graph-Based Representations in Pattern Recognition}, pages
  188--197, 2015.

\bibitem[Chen et~al.(2019)Chen, Huo, Huan, and Vitter]{83_bssged}
X.~Chen, H.~Huo, J.~Huan, and J.~S. Vitter.
\newblock An efficient algorithm for graph edit distance computation.
\newblock \emph{Knowledge-Based Systems}, 163:\penalty0 762 -- 775, 2019.
\newblock ISSN 0950-7051.

\bibitem[Fang et~al.(2023)Fang, Huang, Su, and Kasai]{WWLS22}
Z.~Fang, J.~Huang, X.~Su, and H.~Kasai.
\newblock Wasserstein graph distance based on l1–approximated tree edit
  distance between {W}eisfeiler–{L}ehman subtrees.
\newblock \emph{Proceedings of the AAAI Conference on Artificial Intelligence},
  37\penalty0 (6):\penalty0 7539--7549, Jun. 2023.
\newblock \doi{10.1609/aaai.v37i6.25916}.

\bibitem[Foggia et~al.(2014)Foggia, Percannella, and
  Vento]{pattern_recognition}
P.~Foggia, G.~Percannella, and M.~Vento.
\newblock Graph matching and learning in pattern recognition in the last 10
  years.
\newblock \emph{International Journal of Pattern Recognition and Artificial
  Intelligence}, 02 2014.
\newblock \doi{10.1142/S0218001414500013}.

\bibitem[Garcia-Hernandez et~al.(2019)Garcia-Hernandez, Fernández, and
  Serratosa]{doi:10.1021/acs.jcim.8b00820}
C.~Garcia-Hernandez, A.~Fernández, and F.~Serratosa.
\newblock Ligand-based virtual screening using graph edit distance as molecular
  similarity measure.
\newblock \emph{Journal of Chemical Information and Modeling}, 59\penalty0
  (4):\penalty0 1410--1421, 2019.

\bibitem[Gaüzère et~al.(2014)Gaüzère, Bougleux, Riesen, and
  Brun]{bagsofwalks}
B.~Gaüzère, S.~Bougleux, K.~Riesen, and L.~Brun.
\newblock Approximate graph edit distance guided by bipartite matching of bags
  of walks.
\newblock In \emph{Structural, Syntactic, and Statistical Pattern Recognition,
  {S+SSPR}}, pages 73--82, 08 2014.
\newblock \doi{10.1007/978-3-662-44415-3_8}.

\bibitem[Goldberg et~al.(2017)Goldberg, Hed, Kaplan, and Tarjan]{GoldbergHKT17}
A.~V. Goldberg, S.~Hed, H.~Kaplan, and R.~E. Tarjan.
\newblock Minimum-cost flows in unit-capacity networks.
\newblock \emph{Theory Comput. Syst.}, 61\penalty0 (4):\penalty0 987--1010,
  2017.

\bibitem[Gouda and Hassaan(2016)]{Gouda2016CSIGEDAE}
K.~Gouda and M.~Hassaan.
\newblock {CSI\_GED}: An efficient approach for graph edit similarity
  computation.
\newblock \emph{International Conference on Data Engineering, {ICDE}}, pages
  265--276, 2016.

\bibitem[Grohe and Schweitzer(2020)]{DBLP:journals/cacm/GroheS20}
M.~Grohe and P.~Schweitzer.
\newblock The graph isomorphism problem.
\newblock \emph{Commun. {ACM}}, 63\penalty0 (11):\penalty0 128--134, 2020.
\newblock \doi{10.1145/3372123}.

\bibitem[Jonker and Volgenant(2005)]{Jonker2005ASA}
R.~Jonker and A.~Volgenant.
\newblock A shortest augmenting path algorithm for dense and sparse linear
  assignment problems.
\newblock \emph{Computing}, 38:\penalty0 325--340, 2005.

\bibitem[Kriege et~al.(2016)Kriege, Giscard, and Wilson]{Kriege2016b}
N.~M. Kriege, P.-L. Giscard, and R.~C. Wilson.
\newblock On valid optimal assignment kernels and applications to graph
  classification.
\newblock In \emph{Proceedings of the 30th International Conference on Neural
  Information Processing Systems}, NIPS'16, page 1623–1631, 2016.

\bibitem[Kriege et~al.(2019)Kriege, Giscard, Bause, and Wilson]{OALin}
N.~M. Kriege, P.~Giscard, F.~Bause, and R.~C. Wilson.
\newblock Computing optimal assignments in linear time for approximate graph
  matching.
\newblock In \emph{2019 {IEEE} International Conference on Data Mining,
  {ICDM}}, pages 349--358, 2019.

\bibitem[Kriege et~al.(2020)Kriege, Johansson, and
  Morris]{DBLP:journals/ans/KriegeJM20}
N.~M. Kriege, F.~D. Johansson, and C.~Morris.
\newblock A survey on graph kernels.
\newblock \emph{Appl. Netw. Sci.}, 5\penalty0 (1):\penalty0 6, 2020.
\newblock \doi{10.1007/s41109-019-0195-3}.

\bibitem[Lerouge et~al.(2017)Lerouge, Abu-Aisheh, Raveaux, Héroux, and
  Adam]{LerougeARHA15}
J.~Lerouge, Z.~Abu-Aisheh, R.~Raveaux, P.~Héroux, and S.~Adam.
\newblock New binary linear programming formulation to compute the graph edit
  distance.
\newblock \emph{Pattern Recognition}, 72:\penalty0 254--265, 2017.
\newblock ISSN 0031-3203.
\newblock \doi{https://doi.org/10.1016/j.patcog.2017.07.029}.

\bibitem[Mahé et~al.(2005)Mahé, Ueda, Akutsu, Perret, and
  Vert]{doi:10.1021/ci050039t}
P.~Mahé, N.~Ueda, T.~Akutsu, J.-L. Perret, and J.-P. Vert.
\newblock Graph kernels for molecular structure-activity relationship analysis
  with support vector machines.
\newblock \emph{Journal of Chemical Information and Modeling}, 45\penalty0
  (4):\penalty0 939--951, 2005.
\newblock \doi{10.1021/ci050039t}.

\bibitem[Morris et~al.(2019)Morris, Ritzert, Fey, Hamilton, Lenssen, Rattan,
  and Grohe]{DBLP:journals/corr/abs-1810-02244}
C.~Morris, M.~Ritzert, M.~Fey, W.~L. Hamilton, J.~E. Lenssen, G.~Rattan, and
  M.~Grohe.
\newblock {W}eisfeiler and {L}eman go neural: Higher-order graph neural
  networks.
\newblock \emph{AAAI Conference on Artificial Intelligence}, 33\penalty0
  (01):\penalty0 4602--4609, Jul. 2019.

\bibitem[Morris et~al.(2020)Morris, Kriege, Bause, Kersting, Mutzel, and
  Neumann]{Datasets}
C.~Morris, N.~M. Kriege, F.~Bause, K.~Kersting, P.~Mutzel, and M.~Neumann.
\newblock Tudataset: {A} collection of benchmark datasets for learning with
  graphs.
\newblock In \emph{ICML 2020 Workshop on Graph Representation Learning and
  Beyond (GRL+ 2020)}, 2020.
\newblock URL \url{www.graphlearning.io}.

\bibitem[Morris et~al.(2021{\natexlab{a}})Morris, Fey, and
  Kriege]{DBLP:conf/ijcai/0001FK21}
C.~Morris, M.~Fey, and N.~M. Kriege.
\newblock The power of the {W}eisfeiler-{L}eman algorithm for machine learning
  with graphs.
\newblock In \emph{{IJCAI} 2021}, pages 4543--4550, 2021{\natexlab{a}}.
\newblock \doi{10.24963/ijcai.2021/618}.

\bibitem[Morris et~al.(2021{\natexlab{b}})Morris, Lipman, Maron, Rieck, Kriege,
  Grohe, Fey, and Borgwardt]{wl_survey}
C.~Morris, Y.~Lipman, H.~Maron, B.~Rieck, N.~M. Kriege, M.~Grohe, M.~Fey, and
  K.~M. Borgwardt.
\newblock {W}eisfeiler and {L}eman go machine learning: The story so far.
\newblock \emph{CoRR}, abs/2112.09992, 2021{\natexlab{b}}.

\bibitem[Munkres(1957)]{Munkres}
J.~R. Munkres.
\newblock Algorithms for the assignment and transportation problems.
\newblock \emph{J. Soc. Ind. Appl. Math.}, 5\penalty0 (1), 1957.

\bibitem[Ramshaw and Tarjan(2012)]{Ramshaw2012}
L.~Ramshaw and R.~E. Tarjan.
\newblock On minimum-cost assignments in unbalanced bipartite graphs.
\newblock Technical Report HPL-2012-40, HP Laboratories, June 2012.

\bibitem[Riesen and Bunke(2009)]{27_Riesen}
K.~Riesen and H.~Bunke.
\newblock Approximate graph edit distance computation by means of bipartite
  graph matching.
\newblock \emph{Image Vision Comput.}, 27\penalty0 (7):\penalty0 950--959,
  2009.

\bibitem[Schenker et~al.(2005)Schenker, Kandel, Bunke, and Last]{web_content}
A.~Schenker, A.~Kandel, H.~Bunke, and M.~Last.
\newblock \emph{Graph-Theoretic Techniques for Web Content Mining}, volume~62
  of \emph{Series in Machine Perception and Artificial Intelligence}.
\newblock WorldScientific, 2005.
\newblock ISBN 978-981-256-339-2.

\bibitem[Schulz et~al.(2022)Schulz, Horv{\'a}th, Welke, and
  Wrobel]{80_generalizedWLkernel}
T.~H. Schulz, T.~Horv{\'a}th, P.~Welke, and S.~Wrobel.
\newblock A generalized {W}eisfeiler-{L}ehman graph kernel.
\newblock \emph{Machine Learning}, 111\penalty0 (7):\penalty0 2601--2629, Jul
  2022.
\newblock ISSN 1573-0565.

\bibitem[Sengupta and Mehta(2020)]{SENGUPTA20201851}
S.~Sengupta and G.~Mehta.
\newblock Macrocyclization via c–h functionalization: a new paradigm in
  macrocycle synthesis.
\newblock \emph{Organic \& Biomolecular Chemistry}, 18\penalty0 (10), 2020.
\newblock ISSN 1477-0520.
\newblock \doi{https://doi.org/10.1039/c9ob02765c}.

\bibitem[Serratosa(2015)]{Serratosa15}
F.~Serratosa.
\newblock Speeding up fast bipartite graph matching through a new cost matrix.
\newblock \emph{Int. J. Pattern Recognit. Artif. Intell.}, 29\penalty0
  (2):\penalty0 1550010:1--1550010:17, 2015.

\bibitem[Shervashidze et~al.(2011)Shervashidze, Schweitzer, van Leeuwen,
  Mehlhorn, and Borgwardt]{JMLR:v12:shervashidze11a}
N.~Shervashidze, P.~Schweitzer, E.~J. van Leeuwen, K.~Mehlhorn, and K.~M.
  Borgwardt.
\newblock {W}eisfeiler-{L}ehman graph kernels.
\newblock \emph{Journal of Machine Learning Research}, 12\penalty0
  (77):\penalty0 2539--2561, 2011.

\bibitem[Stauffer et~al.(2017)Stauffer, Tschachtli, Fischer, and
  Riesen]{survey_BGM}
M.~Stauffer, T.~Tschachtli, A.~Fischer, and K.~Riesen.
\newblock A survey on applications of bipartite graph edit distance.
\newblock In \emph{Graph-Based Representations in Pattern Recognition,
  {GbRPR}}, 05 2017.
\newblock \doi{10.1007/978-3-319-58961-9_22}.

\bibitem[St{\"{o}}cker et~al.(2019)St{\"{o}}cker, Sch{\"{a}}fer, Mutzel,
  K{\"{o}}ster, Kriege, and Rahmann]{Stoecker2019}
B.~K. St{\"{o}}cker, T.~Sch{\"{a}}fer, P.~Mutzel, J.~K{\"{o}}ster, N.~M.
  Kriege, and S.~Rahmann.
\newblock Protein complex similarity based on {W}eisfeiler-{L}ehman labeling.
\newblock In \emph{12th Int. Conf. Similarity Search and Applications,
  {SISAP}}, pages 308--322, 2019.
\newblock \doi{10.1007/978-3-030-32047-8_27}.

\bibitem[Togninalli et~al.(2019)Togninalli, Ghisu, Llinares{-}L{\'{o}}pez,
  Rieck, and Borgwardt]{TogninalliGLRB19}
M.~Togninalli, M.~E. Ghisu, F.~Llinares{-}L{\'{o}}pez, B.~Rieck, and K.~M.
  Borgwardt.
\newblock {W}asserstein {W}eisfeiler-{L}ehman graph kernels.
\newblock In \emph{Advances in Neural Information Processing Systems 32,
  NeurIPS}, pages 6436--6446, 2019.

\bibitem[Xu et~al.(2019)Xu, Hu, Leskovec, and
  Jegelka]{DBLP:journals/corr/abs-1810-00826}
K.~Xu, W.~Hu, J.~Leskovec, and S.~Jegelka.
\newblock How powerful are graph neural networks?
\newblock In \emph{7th International Conference on Learning Representations,
  {ICLR} 2019}, 2019.

\bibitem[Yanardag and Vishwanathan(2015)]{Yanardag2015a}
P.~Yanardag and S.~Vishwanathan.
\newblock Deep graph kernels.
\newblock In \emph{ACM SIGKDD}, pages 1365--1374, 2015.
\newblock \doi{10.1145/2783258.2783417}.

\bibitem[Zeng et~al.(2009)Zeng, Tung, Wang, Feng, and Zhou]{2_ComparingStars}
Z.~Zeng, A.~K.~H. Tung, J.~Wang, J.~Feng, and L.~Zhou.
\newblock Comparing stars: On approximating graph edit distance.
\newblock \emph{Proc. {VLDB} Endow.}, 2\penalty0 (1):\penalty0 25--36, 2009.
\newblock \doi{10.14778/1687627.1687631}.

\end{thebibliography}

\end{document}